\documentclass[pdflatex,sn-mathphys-ay]{sn-jnl}
\usepackage{graphicx}
\usepackage{multirow}
\usepackage{amsmath,amssymb,amssymb, bm, amsfonts}
\usepackage{amsthm}
\usepackage{natbib}
\usepackage{mathrsfs}
\usepackage[title]{appendix}
\usepackage{xcolor}
\usepackage{textcomp}
\usepackage{manyfoot}
\usepackage{booktabs}
\usepackage{algorithm}
\usepackage{algorithmicx}
\usepackage{algpseudocode}
\usepackage{listings}%

\theoremstyle{thmstyleone}%
\newtheorem{theorem}{Theorem}

\theoremstyle{thmstyletwo}

\theoremstyle{thmstylethree}%

\begin{document}

\title[SMPtW]{A Novel Three-Parameter Extended Weibull Distribution for Health Data Modelling.}

\author*[1,4]{\fnm{Isqeel} \sur{Ogunsola}}\email{isqeel.ogunsola@manchester.ac.uk}
\author[2]{\fnm{Nurudeen} \sur{Ajadi}}\email{nurudeen.ajadi1@louisiana.edu}
\author[3]{\fnm{Gboyega} \sur{Adepoju}}\email{gadepoju@nova.edu}

\affil*[1]{\orgdiv{Department of Mathematics}, \orgname{University of Manchester}, {\country{United Kingdom}}}
\affil[2]{\orgdiv{Department of Mathematics}, \orgname{University of Louisiana at Lafayette, Louisiana}, {\country{USA}}}
\affil[3]{\orgdiv{Department of Mathematics}, \orgname{Nova Southeastern University, Davie, Florida}, {\country{USA}}}
\affil[4]{\orgdiv{Department of Statistics}, \orgname{Federal University of Agriculture, Abeokuta}, {\country{Nigeria}}}

\abstract{Weibull distribution is widely used in modelling health data. However, its lack of sufficient tail flexibility often results in poor fit in extreme events. We proposed another three-parameter extension of the Weibull distribution with additional flexibility without sacrificing tractability. We derived and studied its statistical properties, including reliability measures, quantile function, moment, stress-strength, mean waiting time, moment generating function, characteristics function, Rényi entropy, order statistics, mean residual life and mode. We adopted the inverse transform approach in random number generation, and through simulation, we evaluated the performance of the maximum likelihood estimates.  The fitness of the distribution was examined using a fracture dataset and compared with five similar extensions of the Weibull distribution. Our proposed novel distribution fits the data best among the competing models. It is therefore recommended as a better alternative in modelling heavily tailed health data due to its flexibility. Robust estimation techniques would be valuable in addressing potential numerical challenges associated with the model in future studies.}
\keywords{Flexibility, Tractability, Properties, Inverse transform and Health data.}

\maketitle

\section{Introduction}
Researchers across disciplines, including medicine, health, biological, and environmental sciences, continually encounter complex real-world situations, and the main challenge is to identify their patterns. Probability distributions play a significant role in analysing real-world data across various areas of the applied sciences. These distributions are effective in analysing extreme situations and natural disasters such as earthquakes, weather, floods, and droughts (\cite{klakattawi2022survival}, \cite{chaito2021length}, and \cite{zare2021generating}). 

However, it is widely known that there is no single probability distribution that is able to provide a satisfactory fit in all situations \citep{almalki2013new}. Therefore, researchers have repeatedly considered proposing new methods to introduce new distributions with a good ability to fit various types of data in different applications by adding a new parameter to common probability distributions. Some of the old methods of achieving this include exponentiation \citep{gupta1998modeling}, transmutation \citep{shaw2007alchemy}, Marshall Olkin \citep{marshall1997new}, Gamma-generated families \citep{zografos2009families},  Alpha power transformation \citep{mahdavi2017new}, to mention a few. See \cite{lee2013methods} for a review of some of the old methods. Meanwhile, many new methods of distribution extensions have been proposed recently, which include \cite{rasool2025innovative}, Novel Exponentiated G-family distribution (NEGFD), \citep{mir2025modeling}, PNJ method \citep{ahad2024novel}, modified kies method \citep{al2020new}, logarithmic-U family method (\citet{zhao2023novel} and so on. While our study is mainly on SMP method of transformation \citep{rasool2025innovative}, we specifically highlighted some of these new methods introduced lately, which are yet to be explored in extending many base distributions. Note that PNJ and SMP methods are named after the authors using the first letters of their names. See \citep{ahad2024new,rasool2025innovative} for details about the abbreviation of the nomenclature. 

Weibull distribution is one of the useful distributions in reliability, engineering, medicine, and other areas in modelling time-to-event data. Its cumulative distribution function (CDF) and probability distribution function (PDF) are given respectively as:
\begin{equation*}
G(y) = \,\left\{ \begin{array}{l}
 1 - e^{ - \left( {\frac{y}{\beta }} \right)^\phi  } \,\,\,\,\,\,\,\,\,\,y \ge 0, \\ 
 0\,\,\,\,\,\,\,\,\,\,\,\,\,\,\,\,\,\,\,\,\,\,\,\,\,\,\,\,\,\,\,\,\,\,y < 0\,\,
 \end{array} \right.
\end{equation*}
and 
\begin{equation*}
g(y) = \left\{ \begin{array}{l}
 \frac{\phi }{\beta }\left( {\frac{y}{\beta }} \right)^{\phi  - 1} e^{ - \left( {\frac{y}{\beta}} \right)^\phi  } ,\,\,y \ge 0, \\ 
 0\,\,\,\,\,\,\,\,\,\,\,\,\,\,\,\,\,\,\,\,\,\,\,\,\,\,\,\,\,\,\,\,\,\,\,\,\,\,\,\,\,\,\,\,y < 0\,\, \\ 
 \end{array} \right.
\end{equation*}

Its relation to several other distributions gives it strength and wider applicability. However, the Weibull distribution failed to capture or model complex data structures and non-monotonic hazard rates. In particular, its lack of sufficient tail flexibility often results in poor fit in the extremes, which are of greatest practical importance. Our study introduces additional flexibility without sacrificing tractability, and this makes our proposed extension an improvement over the standard two-parameter Weibull distribution. 

A new method of extending the flexibility of a distribution was given by \cite{rasool2025innovative}. They stated that for a given random variable $Y$, the CDF and PDF of SMP transformed distribution are respectively given as: 

\begin{equation}
F(y;\lambda ) = \,\left\{ \begin{array}{l}
 \frac{{e^{{\left( {\log \lambda } \right)} {\bar G(y)}}  - \lambda }}{{1 - \lambda }}\,\,\,\,\,\,\,\,\,\, if\,\lambda  > 0,\,\,\lambda  \ne 1, \\ 
G(y)\,\,\,\,\,\,\,\,\,\,\,\,\,\,\,\,\,\,\,\,\,\,\,\,\,\,\,if\,\lambda  = 1\,\, \\ 
 \end{array} \right.
 \label{SMP CDF}
\end{equation}
and 

\begin{equation}
f(y;\lambda ) = \,\left\{ \begin{array}{l}
 \frac{{e^{{\left( {\log \lambda } \right)} {\bar G(y)}} \left( {\log \lambda } \right)g(y)}}{{\lambda  - 1}}\,\,\,\,\,\,\,\,\,\,if\,\lambda  > 0,\,\,\lambda  \ne 1, \\ 
g(y)\,\,\,\,\,\,\,\,\,\,\,\,\,\,\,\,\,\,\,\,\,\,\,\,\,\,\,\,\,\,\,\,\,\,\,\,\,\,\,\,\,\,\,\,if\,\lambda  = 1\,\, \\ 
 \end{array} \right.
 \label{SMP PDF}
\end{equation}
where $G(y)$ is the CDF of the base distribution, $ \bar G(y) $ is the survival function given as $1 - G(y)$ and $\lambda$ is the transforming parameter.

Following \cite{rasool2025innovative}, the PDF and CDF of the three-parameter Weibull distribution via SMP method can be obtained as:

\begin{equation}
F_{SMPtW}(y; \lambda, \phi, \beta ) = \,\left\{ \begin{array}{l}
 \frac{{e^{{\left( {\log \lambda } \right)} {  e^{-\left( \frac{y}{\beta }\right)^\phi} }}  - \lambda }}{{1 - \lambda }}\,\,\,\,\,\,\,\,\,\, if\,\lambda  > 0,\,\,\lambda  \ne 1, \\ 
1 - e^{-\left( \frac{y}{\beta }\right)^\phi}\,\,\,\,\,\,\,\,\,\,\,\,\,\,\,\,\,if\,\lambda  = 1\,\, \\ 
 \end{array} \right.
 \label{SMPtWcdf equation}
\end{equation}

\begin{equation}
f_{SMPtW}(y; \lambda, \phi, \beta ) = \,\left\{ \begin{array}{l}
  \frac{{e^{{\left( {\log \lambda } \right)} {  e^{-\left( \frac{y}{\beta }\right)^\phi} }}} }{{\lambda  - 1}} (\log \lambda) \frac{\phi }{\beta }\left( {\frac{y}{\beta }} \right)^{\phi  - 1} e^{ - \left( {\frac{y}{\beta }} \right)^\phi}\,\,\,\,\,\,\,\,\,\,if\,\lambda  > 0,\,\,\lambda  \ne 1, \\ 
\frac{\phi }{\beta }\left( {\frac{y}{\beta }} \right)^{\phi  - 1} e^{ - \left( {\frac{y}{\beta }} \right)^\phi  }\,\,\,\,\,\,\,\,\,\,\,\,\,\,\,\,\,\,\,\,\,\,\,\,\,\,\,\,\,\,\,\,\,\,\,\,\,\,\,\,\,\,\,\,\,\,\,\,\,\,\,\,\,\,if\,\lambda  = 1\,\, \\ 
 \end{array} \right.
 \label{SMPtWpdf equation}
\end{equation}

Our goals in this study are to (i) provide an updated review of SMP transformed distributions (ii) introduce a new extension of Weibull distributions through SMP model transformation (iii) derive and study the statistical properties extensively (iv) generate random numbers from the distribution using the inverse transform approach and study ML estimates via simulation and (v) show the power of this distribution over existing similar distributions in modeling health data. The PDF and CDF of SMPtW distribution are given in Figures \ref{fig1} and  \ref{fig2}, respectively.

\begin{figure}[H]
\centering
\begin{minipage}{0.48\textwidth}
    \centering
    \includegraphics[width=\linewidth]{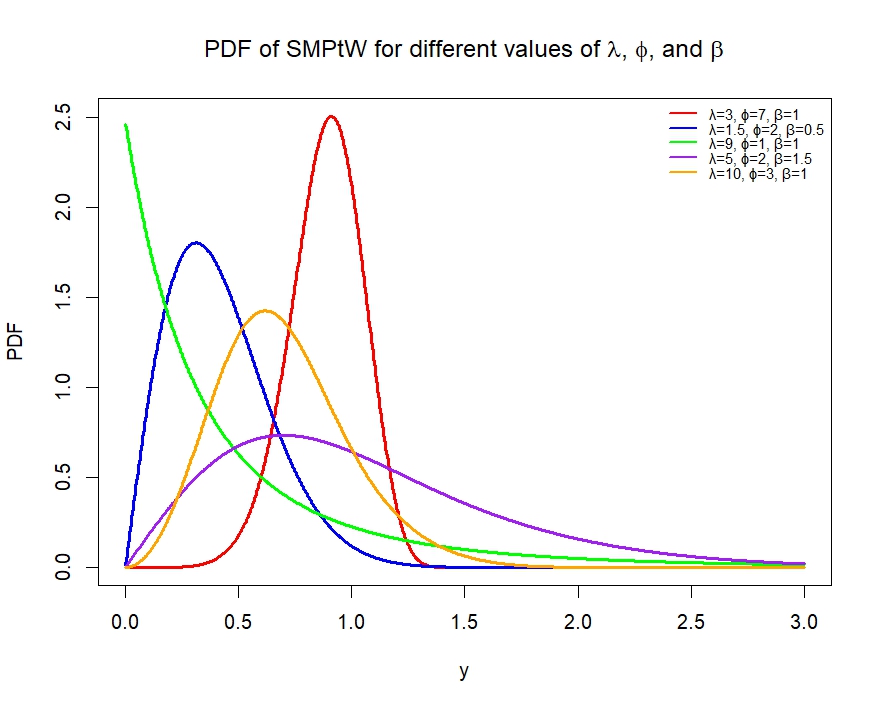}
    \caption{PDF of SMPtW}
    \label{fig1}
\end{minipage}\hfill
\begin{minipage}{0.48\textwidth}
    \centering
    \includegraphics[width=\linewidth]{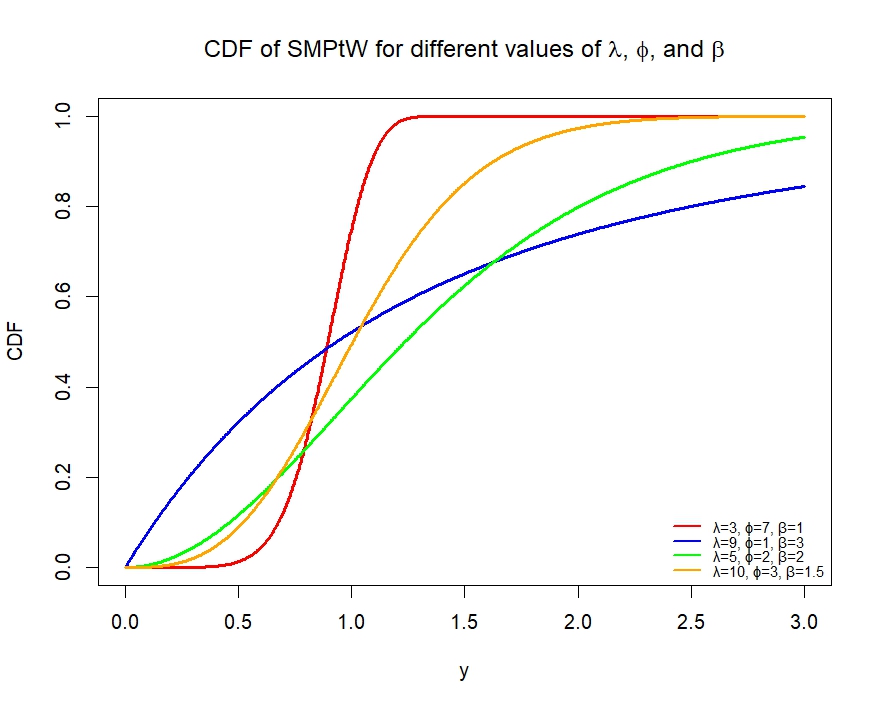}
    \caption{CDF of SMPtW}
    \label{fig2}
\end{minipage}
\end{figure}

The PDF and CDF plots for SMPtW distribution for different values of $\lambda$, $\phi $ and $\beta$ are shown in Figures \ref{fig1} and  \ref{fig2}. The PDF has right skew, exponential decay, unimodal, increasing and decreasing shapes, while the CDF has an increasing shape.

\vspace{2mm}
\noindent The proposed SMPtW distribution has the following sub-models.

\begin{itemize}

    \item  It becomes a Weibull distribution when $\lambda = 1, \phi > 0$ and $\beta > 0$.
    
    \item It reduces to an Exponential distribution when $\lambda = 1, \phi = 1$ and $\beta > 0 $.
    
    \item It becomes Rayleigh distribution if $\lambda = 1$, $\phi = 2$, and $\beta = \sqrt{2}\,\sigma$, where $\sigma > 0$ is the scale parameter of the Rayleigh distribution.
    
\end{itemize}

Other sections of the paper are structured as follows: Section 2 reviews some of the SMP transformed distributions previously studied and presents them in a fashion that is easy to compare at a glance. In Section 3, we introduced the novel three-parameter transformed Weibull distribution using the SMP approach named SMPtW distribution. Its important properties, such as reliability measures, quantile function, moments, stress-strength reliability, mean waiting time, moment-generating function, characteristic function, Rényi entropy, order statistics, mean residual life, and mode, were derived and studied extensively. In Sections 4 and 5, respectively, the MLE estimates were given, and a simulation study was carried out on the MLE using the bias and mean square error as evaluation metrics. In Section 6, the performance of the distribution was then examined using a health dataset and compared with other similar distributions. We discuss the result in Section 7. Finally, a concise conclusion and direction for future study were provided in Section 8.

\section{Review of SMP transformed distributions}
We provided an update on the review of SMP transformed distributions previously studied in this section. This is to keep the readers informed of the existing extended distributions through this approach and provide an opportunity to adopt this method in extending other base distributions yet to be explored via this approach. 

SMP method was introduced with Lomax exponential as an example \cite{rasool2025innovative}. They studied the properties of the new distributions in depth with application to two datasets and showed that the new extension has a better fit compared to Lomax and similar distributions. Ever since then, researchers have extended base distribution using this method. Other extended distributions through this method include: SMP Kumaraswamy \citep{jan2024new}, SMP Inverted Exponential \citep{ahad2024new},  SMP Pareto \citep{ahad2025new}. To the best of our knowledge, as at the moment of writing this paper, only four base distributions have been transformed through SMP. The PDFs of previously transformed SMP distributions are given in Table \ref{review}. The proposed distribution is then studied in Section 3.
\begin{table}[h]
\caption{Review of SMP transformed distributions}\label{review}%
\begin{tabular}{@{}lll@{}}
\toprule
SN & Distributions & Probability density function \\
\midrule
1 & SMP Lomax &  
$
f(z) =
\begin{cases}
\dfrac{e^{(log \gamma)(1+\frac{z}{\theta})^{-\phi}} (log \gamma) \frac{\phi}{\theta} (1+ \frac{z}{\theta})^{-(\phi+1)}}{\gamma-1}, & \gamma>0, \gamma\neq 1, \\
\frac{\phi}{\theta} (1+ \frac{z}{\theta})^{-(\phi+1)}, &  \gamma=1 \\
\end{cases}$ \\
2 & SMP Pareto   &  $
f(z) =
\begin{cases}
\dfrac{log (\gamma)}{\gamma-1} . \frac{\phi}{z^{\phi+1}}. e^{\frac{log (\gamma)}{z^\phi}}, & \gamma>0, \gamma\neq 1, \\
\frac{\phi}{z^{\phi+1}}, &  \gamma=1 \\
\end{cases}$ \\
3 & SMP Kumaraswamy         & $
f(z) =
\begin{cases}
\frac{\lambda \phi \text{log}  \gamma z^{\lambda - 1} (1-z^\lambda)^{\phi -1}e^{log\gamma (1-z^\gamma)^\phi}}{\gamma-1}, & \gamma>0, \gamma\neq 1, \\
\lambda \phi z^{\lambda - 1} (1-z^\lambda)^{\phi -1} &  \gamma=1 \\
\end{cases}$\\
4 & SMP Inverted Exponential  & $
f(z) =
\begin{cases}
  \frac{log (\gamma)}{\gamma - 1} \frac{\gamma}{z^2} e^{(log \gamma) (1-e^{-\gamma/z})} e^{-\gamma / z} & \gamma>0, \gamma\neq 1, \\
 \frac{\gamma}{z^2} e^{-\gamma / z}  &  \gamma=1 \\
\end{cases}$      \\
\botrule
\end{tabular}
\end{table}
\section{Properties of SMPtW Distribution}
The properties of SMPtW distribution are investigated in this section. We established the survival function, hazard function, quantile function, moment, moment generating function, characteristic function, mode, mean waiting time, mean residual life, stress-strength reliability, order statistics, and the Rényi entropy of the proposed SMPtW distribution.
\subsection{Reliability Measures: Survival and Hazard functions of SMPtW Distribution}

Survival function is defined as $ S(y)=1-F(y; \lambda, \phi, \beta )$. Using the CDF of SMPtW distribution in equation \ref{SMPtWcdf equation}, we obtained the survival function of SMPtW distributions as follows:

For $\lambda\neq 1$,
\[
S_{SMPtW}(y)
=1- \frac{{e^{{\left( {\log \lambda } \right)} {  e^{-\left( \frac{y}{\beta }\right)^\phi} }}  - \lambda }}{1- \lambda}
\]
\[
=\frac{{e^{{\left( {\log \lambda } \right)} {  e^{-\left( \frac{y}{\beta }\right)^\phi} }}  - 1 }}{\lambda - 1}
\]
For $\lambda=1$,
\[
S_{SMPtW}(y)=e^{-\left(\frac{y}{\beta}\right)^\phi}.
\]
Therefore,
\begin{equation}
S_{SMPtW}(y)=
\begin{cases}
\displaystyle \frac{{e^{{\left( {\log \lambda } \right)} {  e^{-\left( \frac{y}{\beta }\right)^\phi} }}  - 1 }}{\lambda - 1}
& \lambda>0,\ \lambda\neq 1,\\[1.2em]
\displaystyle e^{-\left(\frac{y}{\beta}\right)^\phi},
& \lambda=1.
\end{cases}
\label{survival equation}
\end{equation}

Also, The hazard function is defined as $$h_{SMPtW}(y)=\frac{f_{SMPtW}(y; \lambda, \phi, \beta )}{S_{SMPtW}(y)}.$$ Using equations \ref{SMPtWpdf equation} and \ref{survival equation}, the hazard function is also obtained as: 

For $\lambda\neq 1$,
\[
h_{SMPtW}(y)
=
\frac{
e^{{\left( {\log \lambda } \right)} {  e^{-\left( \frac{y}{\beta }\right)^\phi}}}
(\log\lambda)\frac{\phi}{\beta}\left(\frac{y}{\beta}\right)^{\phi-1}
e^{-\left(\frac{y}{\beta}\right)^\phi}
}{
{e^{{\left( {\log \lambda } \right)} {  e^{-\left( \frac{y}{\beta }\right)^\phi}}}}-1
}.
\]

For $\lambda=1$,
\[
h_{SMPtW}(y)=\frac{\phi}{\beta}\left(\frac{y}{\beta}\right)^{\phi-1}.
\]

Hence, the hazard function is given as:
\[
h_{SMPtW}(y)=
\begin{cases}
\frac{
e^{{\left( {\log \lambda } \right)} {  e^{-\left( \frac{y}{\beta }\right)^\phi}}}
(\log\lambda)\frac{\phi}{\beta}\left(\frac{y}{\beta}\right)^{\phi-1}
e^{-\left(\frac{y}{\beta}\right)^\phi}
}{
{e^{{\left( {\log \lambda } \right)} {  e^{-\left( \frac{y}{\beta }\right)^\phi}}}}-1
}
& \lambda>0,\ \lambda\neq 1,\\[1.4em]
\displaystyle
\frac{\phi}{\beta}\left(\frac{y}{\beta}\right)^{\phi-1} {  e^{-\left( \frac{y}{\beta }\right)^\phi}},
& \lambda=1.
\end{cases}
\]

\begin{figure}[H]
\centering
\begin{minipage}{0.48\textwidth}
    \centering
    \includegraphics[width=\textwidth]{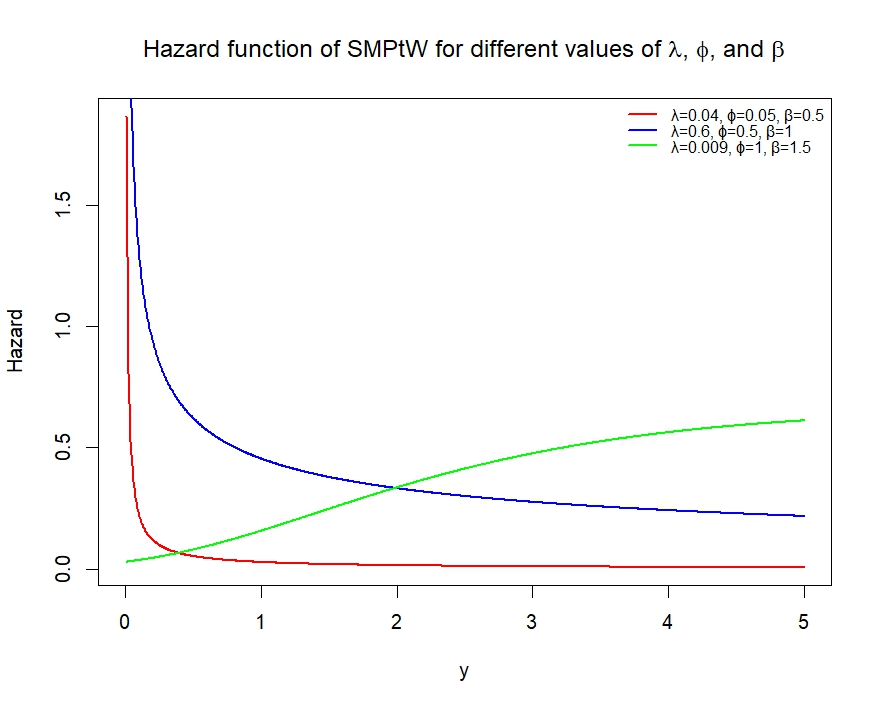}
\caption{Hazard function of SMPtW}\label{fig3}
\end{minipage}\hfill
\begin{minipage}{0.48\textwidth}
    \centering
  \includegraphics[width=\textwidth]{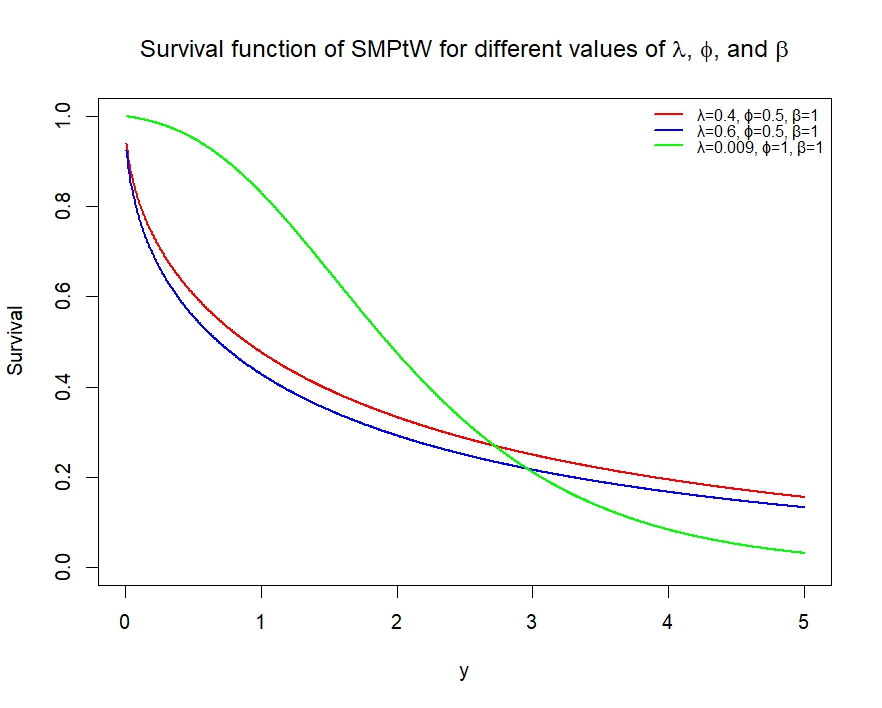}
\caption{Survival function of SMPtW}
    \label{fig4}
\end{minipage}
\end{figure}
The plots of survival and hazard functions of SMPtW distribution are shown in Figures \ref{fig3} and \ref{fig4}, respectively. The hazard function has increasing, decreasing, constant and non-monotonic shapes while the survival function has a decreasing shape.

\subsection{Quantile Function} 

\begin{theorem}
If $Y\sim SMPtW(\lambda,\phi, \beta)$ distribution, then the quantile function of $Y$ is given as

$$Q(u)=\beta \bigg[log \bigg(\dfrac{log\lambda}{log\{u(1-\lambda)+\lambda\}}\bigg)\bigg]^{1/\phi}, \hspace{5mm} 0<u<1$$
\end{theorem}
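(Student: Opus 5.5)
The plan is to invert the CDF of the SMPtW distribution in equation \ref{SMPtWcdf equation} for $\lambda>0$, $\lambda\neq 1$. Set $F_{SMPtW}(Q(u))=u$ for $0<u<1$ and solve for $y=Q(u)$.

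First I will justify that the inverse exists and is unique. The map $t\mapsto e^{(\log\lambda)t}$ is strictly monotone on $[0,1]$: it is increasing if $\lambda>1$ and decreasing if $\lambda<1$. The map $y\mapsto e^{-(y/\beta)^\phi}$ is strictly decreasing from $1$ to $0$ on $[0,\infty)$. Dividing by $1-\lambda$ fixes the sign in both cases, so $F_{SMPtW}$ is a continuous, strictly increasing bijection from $(0,\infty)$ onto $(0,1)$. Hence $Q=F_{SMPtW}^{-1}$ is well defined on $(0,1)$.

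The algebra proceeds in four steps. From
\[
\frac{e^{(\log\lambda)e^{-(y/\beta)^\phi}}-\lambda}{1-\lambda}=u,
\]
multiplying through gives
\[
e^{(\log\lambda)e^{-(y/\beta)^\phi}}=u(1-\lambda)+\lambda .
\]
The right side equals $\lambda^{1-u}$-like intermediate values between $\lambda$ and $1$, so it is positive and the logarithm may be taken:
\[
(\log\lambda)\,e^{-(y/\beta)^\phi}=\log\{u(1-\lambda)+\lambda\}.
\]
Dividing by $\log\lambda\neq 0$ and taking reciprocals gives
\[
e^{(y/\beta)^\phi}=\frac{\log\lambda}{\log\{u(1-\lambda)+\lambda\}}.
\]
A second logarithm yields $(y/\beta)^\phi$, and raising to the power $1/\phi$ and multiplying by $\beta$ gives the stated formula.

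The only real obstacle is the sign and domain bookkeeping in these logarithm steps. The quantity $u(1-\lambda)+\lambda$ is a convex combination of $1$ and $\lambda$, so it lies strictly between $\lambda$ and $1$. Consequently $\log\{u(1-\lambda)+\lambda\}$ has the same sign as $\log\lambda$ and is smaller in absolute value. This makes the ratio $\frac{\log\lambda}{\log\{u(1-\lambda)+\lambda\}}$ strictly greater than $1$, so its logarithm is positive and the $1/\phi$ power is real and positive. I will check this separately for $\lambda>1$ and $0<\lambda<1$.

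Finally, I will remark on the case $\lambda=1$. There the formula is read as the limit $\lambda\to1$: by L'Hôpital the ratio tends to $1/(1-u)$. This recovers the Weibull quantile $\beta[-\log(1-u)]^{1/\phi}$, consistent with equation \ref{SMPtWcdf equation}.
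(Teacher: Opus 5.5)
Your proposal is correct and follows the same route as the paper: set $F_{SMPtW}(y)=u$, isolate $e^{(\log\lambda)e^{-(y/\beta)^\phi}}=u(1-\lambda)+\lambda$, and take logarithms twice. The paper ends with the equivalent form $\beta\left[-\log\left(\log\{\lambda+u(1-\lambda)\}/\log\lambda\right)\right]^{1/\phi}$ rather than taking reciprocals, and it omits your monotonicity argument, your convex-combination check that $\log\lambda/\log\{u(1-\lambda)+\lambda\}>1$, and your $\lambda\to1$ limit recovering the Weibull quantile, all of which are sound.
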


\begin{proof}
Set $u=F_{SMPtW}(y; \lambda, \phi, \beta )$, the quantile function can be obtained as follows:
$$
u=\dfrac{{e^{{\left( {\log \lambda } \right)} {  e^{-\left( \frac{y}{\beta }\right)^\phi} }}  - \lambda }}{1-\lambda} 
$$
Implies that $$ u(1-\lambda)+\lambda= e^{{\left( {\log \lambda } \right)} {  e^{-\left( \frac{y}{\beta }\right)^\phi} }} $$
Taking the logarithm of both sides, we have 

\[
(\log \lambda)\, e^{-\left(\frac{y}{\beta}\right)^\phi}
= \log\!\left(\lambda + u(1 - \lambda)\right).
\]
Simplifying, 
\[
e^{-\left(\frac{y}{\beta}\right)^\phi}
= \frac{\log\!\left(\lambda + u(1 - \lambda)\right)}{\log \lambda}.
\]

\[
-\left(\frac{y}{\beta}\right)^\phi
= \log\!\left(
\frac{\log\!\left(\lambda + u(1 - \lambda)\right)}{\log \lambda}
\right).
\]
We obtained the quantile function as:

\begin{equation}
    Q(u) = \beta \left[
-\log\!\left(
\frac{\log\!\left(\lambda + u(1 - \lambda)\right)}{\log \lambda}
\right)
\right]^{1/\phi}, \quad 0<u<1.
\label{quantile}
\end{equation}

Where $u$ follows a uniform $(0,1)$ distribution. 
\end{proof}
Hence, the median and other quartiles can be obtained using equation \ref{quantile}. For example, the median can be obtained by setting $u = 0.5$ as follows: 
\begin{equation*}
    Q(0.5) = \beta \left[
-\log\!\left(
\frac{\log\!\left(\lambda + 0.5(1 - \lambda)\right)}{\log \lambda}
\right)
\right]^{1/\phi}, \quad 0<u<1.
\end{equation*}

\subsection{Moment}
The moments provide a complete characterisation of the SMPtW distributions' shape and central tendencies.

\begin{theorem}
Suppose $Y \sim SMPtW(\lambda, \phi, \beta),$ then, the $r^{th}$ moment is given by

\begin{equation*}
E(Y^r)=\dfrac{log \lambda}{\lambda -1}\beta^r. \Gamma \left(\dfrac{r}{\phi} + 1\right)\sum_{j=0}^\infty \dfrac{(log \lambda)^j}{j !(j+1)^{r/\phi +1}} 
\end{equation*}
\end{theorem}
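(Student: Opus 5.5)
We start from the definition $E(Y^r)=\int_0^\infty y^r f_{SMPtW}(y;\lambda,\phi,\beta)\,dy$ with the density in equation \ref{SMPtWpdf equation} for $\lambda\neq 1$. The only awkward factor is the doubly exponential term $e^{(\log\lambda)e^{-(y/\beta)^\phi}}$. We will expand it in its exponential power series,
\[
e^{(\log\lambda)e^{-(y/\beta)^\phi}}=\sum_{j=0}^\infty \frac{(\log\lambda)^j}{j!}\,e^{-j(y/\beta)^\phi}.
\]
After this substitution the integrand becomes a series of Weibull-type terms:
\[
E(Y^r)=\frac{\log\lambda}{\lambda-1}\sum_{j=0}^\infty\frac{(\log\lambda)^j}{j!}\int_0^\infty y^r\frac{\phi}{\beta}\Big(\frac{y}{\beta}\Big)^{\phi-1}e^{-(j+1)(y/\beta)^\phi}\,dy .
\]

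Next we evaluate each integral with the substitution $t=(j+1)(y/\beta)^\phi$. Then $y=\beta\,(t/(j+1))^{1/\phi}$ and $\frac{\phi}{\beta}(y/\beta)^{\phi-1}dy=dt/(j+1)$. The $j$-th integral becomes
\[
\frac{\beta^r}{(j+1)^{r/\phi+1}}\int_0^\infty t^{r/\phi}e^{-t}\,dt=\frac{\beta^r\,\Gamma(r/\phi+1)}{(j+1)^{r/\phi+1}}.
\]
Pulling out the $j$-free factor $\beta^r\Gamma(r/\phi+1)$ gives exactly the stated formula. For $\lambda=1$ the formula should be read as a limit: $\log\lambda/(\lambda-1)\to1$ and only the $j=0$ term survives, which recovers the Weibull moment $\beta^r\Gamma(r/\phi+1)$. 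This serves as a consistency check.

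The one point needing justification is interchanging the sum and the integral. This is the main obstacle. For $\lambda>1$ every term is nonnegative, so Tonelli's theorem applies directly. For $0<\lambda<1$, $\log\lambda<0$ and the terms alternate in sign. In that case we will dominate by absolute values: the same computation with $|\log\lambda|$ gives
\[
\sum_j \frac{|\log\lambda|^j}{j!}\,\frac{\beta^r\Gamma(r/\phi+1)}{(j+1)^{r/\phi+1}}\le \beta^r\Gamma(r/\phi+1)\,e^{|\log\lambda|}<\infty,
\]
so Fubini's theorem justifies the termwise integration. The same bound shows that the resulting series converges absolutely for all $\lambda>0$.
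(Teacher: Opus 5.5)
Your proof is correct and follows essentially the same route as the paper. Both expand $e^{(\log\lambda)e^{-(y/\beta)^\phi}}$ in its exponential series and reduce each term to $\beta^r\Gamma(r/\phi+1)/(j+1)^{r/\phi+1}$ by a Weibull-type substitution; the paper substitutes $t=(y/\beta)^\phi$ before expanding and then $v=(j+1)t$, while you combine these into the single substitution $t=(j+1)(y/\beta)^\phi$. Your Tonelli/Fubini justification of the termwise integration, for both $\lambda>1$ and $0<\lambda<1$, is a genuine addition: the paper interchanges sum and integral without comment.
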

\begin{proof}
The $r^{th}$ moment of a random variable Y that follows SMPtW distribution is defined by \\
\[
\mu_r = E(Y^r)=\int_0^ \infty y^r .f_{SMPtW}(y, \lambda, \phi, \beta) dy
\]
\[
E(Y^r)=\int_0^ \infty y^r . \dfrac{e^{(log \lambda)e^{-(\frac{y}{\beta})^{\phi}}}(log \lambda) \frac{\phi}{\beta} (\frac{y}{\beta})^{\phi -1 }e^{-(\frac{y}{\beta})^{\phi}}}{\lambda-1} dy
\]

\[
=\dfrac{\phi log \lambda}{\beta(\lambda -1)} \int_0^ \infty y^r . (\frac{y}{\beta})^{\phi -1 } e^{-(\frac{y}{\beta})^{\phi}}e^{(log \lambda)e^{-(\frac{y}{\beta})^{\phi}}} dy \\
\]

Let $(\frac{y}{\beta})^{\phi-1}=\frac{y^{\phi-1}}{\beta^{\phi-1}}$ and \quad $y^r .(\frac{y}{\beta})^{\phi -1 } = y^{r+\phi-1} \beta^{-(\phi-1)}$\\
\[
E(Y^r)=\dfrac{\phi log \lambda}{\beta^\phi(\lambda -1)}\int_0^ \infty y^{r+\phi-1}e^{-(\frac{y}{\beta})^{\phi}}e^{(log \lambda)e^{-(\frac{y}{\beta})^{\phi}}} dy
\]
Let $t=(\frac{y}{\beta})^{\phi}$, so we have that $y=\beta t^{1/\phi}$, $dy=\dfrac{\beta}{\phi}t^{\frac{1}{\phi}-1} dt $.\\

$y^{r+\phi -1}=\beta ^{r+\phi -1} t^{\frac{r + \phi-1}{\phi}}$ ; $e^{-(\frac{y}{\beta})^\phi}=e^{-t}$ and $e^{(log \lambda)e^{-(\frac{y}{\beta})^{\phi}}}=e^{(log \lambda)e^{-t}}$

\[
E(Y^r)=\dfrac{\phi log \lambda}{\beta^\phi(\lambda -1)}\int_0^ \infty \beta^{r+\phi - 1} t^{\frac{r+\phi-1}{\phi}}e^{-t}e^{(log \lambda)e^{-t}} \frac{\beta}{\phi}t^{\frac{1}{\phi} -1} dt
\]

\[
=\dfrac{log \lambda}{\lambda -1}.\beta^r \int_0^ \infty t^{r/\phi}e^{-t}.e^{(log \lambda )e^{-t} } dt
\]
Using the power series,

\[
e^{(log \lambda)e^{-t}}=\sum_{j=0}^\infty \dfrac{(log \lambda)^j}{j !} e^{-jt}
\]
\[
E(Y^r)=\dfrac{log \lambda}{\lambda -1}.\beta^r \sum_{j=0}^\infty \dfrac{(log \lambda)^j}{j !} \int_0^ \infty t^{r/\phi}e^{-(j+1)t} dt
\]

Let $v=(j+1)t$, $dt = \dfrac{dv}{(j+1)}$, $t^{r/\phi}=\dfrac{v^{r/\phi}}{(j+1)^{r/\phi}}$.

\[
E(Y^r)=\dfrac{log \lambda}{\lambda -1} .\beta^r\sum_{j=0}^\infty \dfrac{(log \lambda)^j}{j !} \int_0^ \infty \dfrac{v^{r/\phi}.e^{-v}}{(j+1)^{r/\phi +1}} dv
\]

\[
=\dfrac{log \lambda}{\lambda -1} \beta^r\sum_{j=0}^\infty \dfrac{(log \lambda)^j}{j !(j+1)^{r/\phi +1}} \int_0^ \infty v^{r/\phi}.e^{-v} dv
\]
\begin{equation}
E(Y^r)=\dfrac{log \lambda}{\lambda -1}\beta^r. \Gamma \left(\dfrac{r}{\phi} + 1\right)\sum_{j=0}^\infty \dfrac{(log \lambda)^j}{j !(j+1)^{r/\phi +1}} 
\label{rthmoment}
\end{equation}
\end{proof}

Consequently, the first, second, third and fourth moments can be obtained using equation \ref{rthmoment} by setting $r = 1, 2,3 $ and $4$ respectively.

\subsection{Moment Generating Function}
The moment generating function (MGF) is used in obtaining the moments of a distribution, such as the mean and so on. We provided the MGF of the SMPtW distribution in this subsection.

\begin{theorem}
The MGF of a random variable say $Y$ that follows SMPtW distribution is denoted as $M_y(t)$ is
\begin{equation}
M_y(t)= \sum_{r=0}^{\infty} \frac{t^r}{r!} \dfrac{\log \lambda}{\lambda -1}\sum_{j=0}^\infty \dfrac{(log \lambda)^j}{j !(j+1)^{r/\phi +1}}  \Gamma \left(\dfrac{r}{\phi} + 1\right)
\end{equation}
\end{theorem}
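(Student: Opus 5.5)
The plan is to expand the exponential inside the defining expectation and then apply the moment formula already established. By definition $M_y(t)=E(e^{tY})$. Write $e^{tY}=\sum_{r=0}^{\infty} t^rY^r/r!$ and interchange expectation and summation, so that
\[
M_y(t)=\sum_{r=0}^{\infty}\frac{t^r}{r!}E(Y^r).
\]
Substituting the closed form for $E(Y^r)$ from equation \ref{rthmoment} then gives the stated double series directly. The factor $\beta^r$ does not appear in the displayed statement. Either the statement silently takes $\beta=1$, or the factor should be absorbed as $(t\beta)^r$. I would carry $\beta^r$ through the derivation and remark that the display corresponds to $\beta=1$, or equivalently to $t$ replaced by $\beta t$.

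The interchange of sum and expectation is justified by Tonelli's theorem. For $t\ge 0$ every term is nonnegative. For general real $t$, apply Tonelli to $\sum_r |t|^r Y^r/r! = e^{|t|Y}$; this is integrable as long as $E(e^{|t|Y})<\infty$, and the bound then lets dominated convergence handle the signed series.

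The main obstacle is convergence, i.e. finiteness of the MGF itself. The density is bounded by a constant times the Weibull density, since $e^{(\log\lambda)e^{-(y/\beta)^\phi}}\le\max(1,\lambda)$. Hence $E(e^{tY})$ is finite exactly when the Weibull MGF is finite. That holds for all $t$ when $\phi>1$, and for $t<1/\beta$ when $\phi=1$. When $\phi<1$ the MGF diverges for every $t>0$, and the series is then only formal.

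I would check this using the size of the terms. The inner sum over $j$ is bounded by $\sum_j |\log\lambda|^j/j! = \max(\lambda,1/\lambda)$. The $r$-th term therefore behaves like $|t|^r\beta^r\Gamma(r/\phi+1)/r!$. By Stirling this is summable for $\phi>1$, summable for $|t|\beta<1$ when $\phi=1$, and divergent for $\phi<1$. This is the only non-routine step, and I would state the theorem with the proviso $\phi\ge 1$ (with $|t|<1/\beta$ if $\phi=1$). The inner interchange of the $j$-sum with the integral has already been carried out in the earlier moment proof.
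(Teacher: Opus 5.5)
Your proposal is correct and follows the paper's own argument: expand $e^{ty}$ as a power series, interchange summation and integration, and substitute the $r$-th moment formula. The paper's final display in fact carries the factor $\beta^r$, which confirms that its omission from the theorem statement is a typo.

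The paper gives no justification for the interchange or for convergence, so your Tonelli argument and the proviso $\phi>1$ (or $\phi=1$ with $|t|<1/\beta$) are a genuine improvement. One small point: your ``exactly when'' claim and the divergence for $\phi<1$ also need the lower bound $e^{(\log\lambda)e^{-(y/\beta)^\phi}}\ge\min(1,\lambda)>0$, not only the upper bound you state.
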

\begin{proof}
The MGF of SMPtW distribution is given as:

\begin{equation}
    M_y(t) =  \int_0^\infty e^{ty}f_{\text{SMPtW}}(y;\lambda, \phi, \beta)dy 
\end{equation}
Expressing $e^{ty}$ in series form, we have; 
\begin{equation*}
 M_y(t) = \sum_{r=0}^{\infty} \frac{t^r}{r!}\int_0^\infty y^{r}f_{\text{SMPtW}}(y;\lambda, \phi, \beta)dy 
 \end{equation*}
Following the rth moment earlier obtained in equation \ref{rthmoment}, we obtained the MGF of SMPtW distribution as:
\begin{equation}
    M_y(t) = \sum_{r=0}^{\infty} \beta^r \frac{t^r}{r!} \dfrac{log \lambda}{\lambda -1} \Gamma \left(\dfrac{r}{\phi} + 1\right) \sum_{j=0}^\infty \dfrac{(log \lambda)^j}{j !(j+1)^{r/\phi +1}} 
\end{equation}
\end{proof}

\subsection{Characteristics Function}
The advantage of this property is that it exists for all distributions, even when the moment generating function does not exist. The characteristic function for a random variable $Y$ denoted as $ J_Y (t)$ is defined as:
\begin{equation*}
    J_Y (t) = E(e^{ity})
\end{equation*}
\begin{theorem}
Suppose $Y \sim SMPtW(\lambda, \phi, \beta) $, the characteristics function $J_Y (t)$ of SMPtW is given as
\begin{equation*}
    J_Y (t) = \dfrac{log \lambda}{\lambda -1} \sum_{r=0}^{\infty} \beta^r \frac{(it)^r}{r!} \Gamma \left(\dfrac{r}{\phi} + 1\right)  \sum_{j=0}^\infty \dfrac{(log \lambda)^j}{j !(j+1)^{r/\phi +1}}
\end{equation*}
\end{theorem}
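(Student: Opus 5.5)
The plan is to mirror the derivation of the moment generating function and replace the real argument $t$ by $it$. By definition $J_Y(t)=E(e^{itY})=\int_0^\infty e^{ity} f_{SMPtW}(y;\lambda,\phi,\beta)\,dy$. Expanding the exponential as $e^{ity}=\sum_{r=0}^\infty \frac{(it)^r}{r!}y^r$ and interchanging sum and integral gives
\[
J_Y(t)=\sum_{r=0}^{\infty}\frac{(it)^r}{r!}\,E(Y^r).
\]
Substituting the $r^{th}$ moment from equation \ref{rthmoment} and pulling the constant $\frac{\log\lambda}{\lambda-1}$ outside the sum then yields exactly the stated expression.

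The routine inner step is already available. The moment formula came from the substitution $t=(y/\beta)^\phi$, the series expansion $e^{(\log\lambda)e^{-t}}=\sum_j(\log\lambda)^j e^{-jt}/j!$, and a Gamma integral. In that step the interchange of sum and integral is harmless, since all terms are nonnegative when $\lambda>1$, and for $0<\lambda<1$ the series is dominated by $\sum_j|\log\lambda|^j e^{-jt}/j!$.

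The main obstacle is justifying the interchange of $\sum_r$ and $\int$ in the first step. Dominated convergence applies because $\sum_r |t|^r y^r/r! = e^{|t|y}$, so it suffices that $E(e^{|t|Y})<\infty$. I would check this from the tail: $f_{SMPtW}(y)\le C\,g(y)$ with $C=\max(1,\lambda)\,|\log\lambda|/|\lambda-1|$, because $e^{(\log\lambda)e^{-t}}$ lies between $1$ and $\lambda$. Hence $Y$ has Weibull-type tails $e^{-(y/\beta)^\phi}$, and $E(e^{|t|Y})<\infty$ for all $t$ when $\phi>1$, and for $|t|<1/\beta$ when $\phi=1$.

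The same bound controls convergence of the resulting series. Using $\sum_j (\log\lambda)^j/(j!(j+1)^{r/\phi+1})\le \lambda^{\pm1}$-type bounds, the series is dominated by $\sum_r |\beta t|^r\Gamma(r/\phi+1)/r!$. By Stirling this converges for every $t$ when $\phi>1$ and for $|\beta t|<1$ when $\phi=1$.

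When $\phi<1$ the series diverges for $t\neq0$. In that case I would state the result as a formal expansion, in the same sense as the MGF theorem above; the characteristic function itself still exists as the integral. I would note this caveat rather than try to remove it. Finally, I would also record the case $\lambda=1$, where the prefactor $\frac{\log\lambda}{\lambda-1}$ and the sum are interpreted by their limit and the formula reduces to the usual Weibull characteristic function.
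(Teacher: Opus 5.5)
Your proposal follows the paper's proof exactly: expand $e^{ity}$ as a power series, interchange the sum with the integral, and substitute the $r^{th}$ moment from equation \ref{rthmoment}. The paper makes that interchange without comment, so your dominated-convergence justification, and the resulting restriction to $\phi>1$ (or to $\phi=1$ with $|\beta t|<1$, the series being only formal when $\phi<1$), are correct additions that the paper's argument lacks rather than a departure from it.
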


\begin{proof}

By definition, the characteristic function of SMtW distribution is given as: 
\begin{equation*}
     J_Y (t) = \int_0^{\infty} e^{ity} f_{SMPtW}(y;\lambda, \phi, \beta)dy
\end{equation*}
By series expansion, we have;
\begin{equation*}
    J_Y (t) = \sum_{r=0}^{\infty} \frac{(it)^r}{r!}\int_0^{\infty} y^{r} f_{SMPtW}(y;\lambda, \phi, \beta)dy
\end{equation*}
Now, using equation \ref{rthmoment} we obtain;
\begin{equation}
    J_Y (t) = \dfrac{log \lambda}{\lambda -1} \sum_{r=0}^{\infty} \beta^r \frac{(it)^r}{r!} \Gamma \left(\dfrac{r}{\phi} + 1\right)  \sum_{j=0}^\infty \dfrac{(log \lambda)^j}{j !(j+1)^{r/\phi +1}}
\end{equation}
\end{proof}

\subsection{Mode of SMPtW Distribution}
Given a random variable $Y$ which follows the SMPtW distribution, the mode of the distribution is the value of $y$ for which $\frac{d}{dx}f(y) = 0$ and $\frac{d^2}{dx^2}f(y) < 0.$ However, given the positive nature of the PDF, the mode can also be obtained by finding the value of $y$ for which the log of the PDF derivative equals zero, i.e. $ \frac{d}{dy} logf(y) =0$. Using equation \ref{SMPtWpdf equation}, we derived the mode as follows:

\begin{align*}
    Log ( f_{SMPtW}(y; \lambda, \phi, \beta ) )& = log \left(\frac{{e^{{\left( {\log \lambda } \right)} {  e^{-\left( \frac{y}{\beta }\right)^\phi} }}} }{{\lambda  - 1}} (\log \lambda) \frac{\phi }{\beta }\left( {\frac{y}{\beta }} \right)^{\phi  - 1} e^{ - \left( {\frac{y}{\beta }} \right)^\phi} \right) \\
    & = (log \lambda)e^{-\left( \frac{y}{\beta }\right)^\phi}- (y/\beta)^{\phi}+ log(log \lambda) + log(\phi/\beta) \\ 
    &+  (\phi -1) log(y/\beta) - log (  \lambda-1)
\end{align*}
Now, differentiating with respect to y, we have;

\begin{align*}
   \frac{d Log f_{SMPtW}(y; \lambda, \phi, \beta )}{dy} &  = -(log \lambda) \frac{\phi}{\beta} \left(\frac{y}{\beta}\right)^{\phi - 1} e^{-\left(\frac{y}{\beta}\right)^{\phi}}- \frac{\phi}{\beta}\left(\frac{y}{\beta}\right)^{\phi - 1} + (\phi -1)/y 
    \end{align*}
Equating to zero, we have;

\begin{align*}
    0 & = -(log \lambda) \frac{\phi}{\beta} \left(\frac{y}{\beta}\right)^{\phi - 1} e^{-\left(\frac{y}{\beta}\right)^{\phi}}- \frac{\phi}{\beta}\left(\frac{y}{\beta}\right)^{\phi - 1} + (\phi -1)/y \\
    \phi - 1 & = \frac{\phi y}{\beta} \left(\frac{y}{\beta}\right)^{\phi - 1} \left[ (\log \lambda) e^{-\left(\frac{y}{\beta}\right)^{\phi}} +1 \right] \\
    & = \phi \left(\frac{y}{\beta}\right)^{\phi} \left[ (\log \lambda) e^{-\left(\frac{y}{\beta}\right)^{\phi}} +1 \right] 
 \end{align*}
 Therefore,
\begin{equation}
     y = \beta \left[ \frac{\phi - 1}{\phi[ (log \lambda) e^{- \left(\frac{y}{\beta}\right)^\phi} + 1]} \right]^{1/\phi}
     \label{mode}
\end{equation}

\textit{\textbf{Remark 2:}} The equation \ref{mode} is not in a closed form and can be obtained numerically. It is to be noted that if $\phi >1$, the density admits an interior mode given by the unique solution of \ref{mode}. If $0<\phi \leq 1$, and for $y>0$, all the derivatives will be negative with a decreasing density and mode at $y= 0$.

\subsection{Mean Waiting Time}
\begin{theorem}
The mean waiting time for a random variable $Y$ which follows the SMPtW distribution is given as:
\[
\bar\mu(t) = t+ \frac{\beta(log \lambda)}{e^{(log \lambda)e^-{\left(\frac{t}{\beta}\right)}^\phi}-\lambda} \sum_{j=0} ^\infty \frac{(log \lambda)^j}{j!(j+1)^{1+1/\phi}}\gamma \left( 1+\frac{1}{\phi}, (j+1)(t/\beta)^{\phi}\right)
\]
\end{theorem}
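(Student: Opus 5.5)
The plan is to start from the standard definition of the mean waiting time (mean inactivity time),
\[
\bar\mu(t)=t-\frac{1}{F_{SMPtW}(t)}\int_0^t y\, f_{SMPtW}(y;\lambda,\phi,\beta)\,dy ,
\]
and to evaluate the truncated first moment $\int_0^t y f_{SMPtW}(y)\,dy$ by repeating, almost verbatim, the computation used in the proof of the $r^{th}$ moment theorem with $r=1$. The only change is that the upper limit is $t$ instead of $\infty$.

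\textbf{Step 1: reduce the truncated moment to a single integral.} Write
\[
\int_0^t y f_{SMPtW}(y)\,dy=\frac{\phi\log\lambda}{\beta^\phi(\lambda-1)}\int_0^t y^{\phi}e^{-(y/\beta)^\phi}e^{(\log\lambda)e^{-(y/\beta)^\phi}}dy.
\]
Substitute $s=(y/\beta)^\phi$, so that the upper limit becomes $(t/\beta)^\phi$. Exactly as in the moment proof, this gives
\[
\frac{\beta\log\lambda}{\lambda-1}\int_0^{(t/\beta)^\phi}s^{1/\phi}e^{-s}e^{(\log\lambda)e^{-s}}\,ds .
\]

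\textbf{Step 2: expand and integrate term by term.} Expand $e^{(\log\lambda)e^{-s}}=\sum_{j\ge0}(\log\lambda)^j e^{-js}/j!$. Exchanging sum and integral is harmless here, since on a finite interval the series converges uniformly. Then substitute $v=(j+1)s$ in each term. Each term becomes
\[
\frac{(\log\lambda)^j}{j!\,(j+1)^{1+1/\phi}}\,\gamma\!\left(1+\tfrac1\phi,(j+1)(t/\beta)^\phi\right),
\]
where $\gamma(a,x)=\int_0^x v^{a-1}e^{-v}dv$ is the lower incomplete gamma function.

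\textbf{Step 3: divide by the CDF.} From equation (\ref{SMPtWcdf equation}), $F_{SMPtW}(t)=\dfrac{e^{(\log\lambda)e^{-(t/\beta)^\phi}}-\lambda}{1-\lambda}$. Dividing the prefactor $\beta\log\lambda/(\lambda-1)$ by this gives
\[
\frac{\beta\log\lambda}{\lambda-1}\cdot\frac{1-\lambda}{e^{(\log\lambda)e^{-(t/\beta)^\phi}}-\lambda}=-\frac{\beta\log\lambda}{e^{(\log\lambda)e^{-(t/\beta)^\phi}}-\lambda}.
\]
Its minus sign cancels the minus in the definition of $\bar\mu(t)$, which yields the stated "$t+\dots$" form.

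\textbf{Main obstacle.} The main point of care is this sign and denominator bookkeeping in Step 3. I also need to check that the expression is well defined for both $\lambda>1$ and $0<\lambda<1$: since $\log\lambda$ and $e^{(\log\lambda)e^{-(t/\beta)^\phi}}-\lambda$ then have matching behaviour, the result is positive and less than $t$ in both cases. The interchange of sum and integral in Step 2 is routine.
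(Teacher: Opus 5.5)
Your proposal is correct and follows the paper's proof essentially step for step. Both use the definition $\bar\mu(t)=t-F(t)^{-1}\int_0^t y f(y)\,dy$, the substitution $(y/\beta)^\phi$, the series expansion of $e^{(\log\lambda)e^{-s}}$, the rescaling $v=(j+1)s$ that yields the lower incomplete gamma function, and division by $F(t)$ to flip the sign into the ``$t+\cdots$'' form; your positive substitution and explicit sign bookkeeping in Step 3 (implicitly for $\lambda\neq 1$) are cleaner than the paper's version, which writes $u=-(y/\beta)^\phi$ with a spurious sign and garbles the upper limit.
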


\begin{proof}
The mean waiting time is defined by 
\[
\bar\mu(t) = t- \frac{1}{F(t)} \int_0 ^t y.f_{SMPtW}(y;\lambda, \phi, \beta) dy
=t-\frac{1-\lambda}{e^{(log \lambda)e^{(\frac{-y}{\beta})^\phi}}-\lambda} \int_0 ^t y . \dfrac{e^{(log \lambda)e^{-(\frac{y}{\beta})^{\phi}}}(log \lambda) \frac{\phi}{\beta} (\frac{y}{\beta})^{\phi -1 }e^{-(\frac{y}{\beta})^{\phi}}}{\lambda-1} dy
\]
Let $u=-\left(\frac{y}{\beta}\right)^\phi$ ; $y=\beta u^{1/\phi}$ ; $dy=\frac{\beta}{\phi}u^{1/\phi-1} du$. In addition, we have the conditions that: If $ y=0$ then $u=0$ and if $ y=t$ then $u=(\frac{-t}{\beta})$\\

Using the power series
\[
e^{(log \lambda)e^{-u}}=\sum_{j=0} ^\infty \frac{(log \lambda)^j}{j!}e^{-ju}
\]
\[
\bar\mu(t) = t+\frac{\lambda -1}{\left(e^{(log \lambda)e^-\left(\frac{t}{\beta}\right)}-\lambda\right) \left(\lambda-1\right)} \beta(log \lambda) \sum_{j=0} ^\infty \frac{(log \lambda)^j}{j!} \int_0 ^u u^{1/\phi}.e^{-(j+1)u} du
\]
$v=(j+1)u$, $du=\frac{dv}{(j+1)}$.

The integral function can be expressed as 

\[
\int_0 ^u u^{1/\phi}.e^{-(j+1)u}du=\frac{1}{(j+1)^{1+1/\phi}}\gamma \left( 1+\frac{1}{\phi}, (j+1)(t/\beta)^{\phi}\right)
\]

\begin{equation}
\bar\mu(t) = t+ \frac{\beta(log \lambda)}{e^{(log \lambda)e^-{\left(\frac{t}{\beta}\right)}^\phi}-\lambda} \sum_{j=0} ^\infty \frac{(log \lambda)^j}{j!(j+1)^{1+1/\phi}}\gamma \left( 1+\frac{1}{\phi}, (j+1)(t/\beta)^{\phi}\right)
\end{equation}
\end{proof}

\subsection{Mean Residual Life of SMPtW Distribution}
\begin{theorem}
    The mean residual of the SMPtW distribution is given as \[
\mu(t)=\dfrac{-\beta log \lambda}{1-e^{(log \lambda)e^-(t/\beta)^\phi}} \sum_{j=0} ^\infty\dfrac{(log \lambda)^j}{j !(j+1)^{(1/\phi +1)}}\Gamma \left( 1+\frac{1}{\phi}, (j+1)(t/\beta)^{\phi}\right)-t
\]

\end{theorem}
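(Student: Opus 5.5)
We start from the standard definition of the mean residual life,
\[
\mu(t)=E(Y-t\mid Y>t)=\frac{1}{S_{SMPtW}(t)}\int_t^\infty y\,f_{SMPtW}(y;\lambda,\phi,\beta)\,dy-t .
\]
For $\lambda\neq1$ we take $S_{SMPtW}(t)$ from equation \ref{survival equation}. Writing its denominator as $\lambda-1$ and flipping signs, $1/S_{SMPtW}(t)=(1-\lambda)/\bigl(1-e^{(\log\lambda)e^{-(t/\beta)^\phi}}\bigr)$. The task therefore reduces to evaluating the incomplete first moment $\int_t^\infty y f_{SMPtW}(y)\,dy$. This is the upper-tail analogue of the lower-tail integral used for the mean waiting time.

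To evaluate that integral we follow the proof of the $r$th moment theorem with $r=1$. We substitute $w=(y/\beta)^\phi$, so that $y=\beta w^{1/\phi}$ and $dy=(\beta/\phi)w^{1/\phi-1}dw$. The lower limit $y=t$ becomes $w=(t/\beta)^\phi$. Exactly as in the moment computation, the integrand collapses to
\[
\frac{\log\lambda}{\lambda-1}\,\beta\,w^{1/\phi}e^{-w}e^{(\log\lambda)e^{-w}} .
\]
We then expand $e^{(\log\lambda)e^{-w}}=\sum_{j\ge0}(\log\lambda)^j e^{-jw}/j!$. The interchange of sum and integral is justified by Tonelli when $\lambda>1$. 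For $0<\lambda<1$ the series is dominated by $e^{|\log\lambda|}$, so dominated convergence applies.

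Each term is $\int_{(t/\beta)^\phi}^\infty w^{1/\phi}e^{-(j+1)w}dw$. With $v=(j+1)w$ this becomes $(j+1)^{-(1/\phi+1)}\,\Gamma\bigl(1+1/\phi,(j+1)(t/\beta)^\phi\bigr)$, an upper incomplete gamma function. This parallels the lower incomplete gamma that appeared in the mean waiting time theorem.

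Finally we combine the pieces. The factor $(1-\lambda)/(\lambda-1)=-1$ multiplies $\beta\log\lambda$, which produces the prefactor $-\beta\log\lambda/\bigl(1-e^{(\log\lambda)e^{-(t/\beta)^\phi}}\bigr)$. Subtracting $t$ then gives the stated formula.

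The main obstacle is bookkeeping rather than analysis. We must keep the signs and the two forms of the normalising constant consistent, since $\lambda-1$ versus $1-\lambda$ is exactly what yields the leading minus sign. We must also check that the resulting expression is positive for both $\lambda>1$ and $0<\lambda<1$. As a sanity check we would let $t\to0$: then $\Gamma(\cdot,0)=\Gamma(1+1/\phi)$ and the denominator becomes $1-\lambda$, so the formula should recover $E(Y)$ from equation \ref{rthmoment} with $r=1$.
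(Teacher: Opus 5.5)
Your argument is correct and is essentially the paper's. The paper writes the tail moment as $E(Y)-\int_0^t y f_{SMPtW}(y)\,dy$, reusing the first moment and the lower-incomplete-gamma partial moment from the mean waiting time, and then implicitly applies $\Gamma(s)-\gamma(s,x)=\Gamma(s,x)$ term by term ``by simplification''. You instead integrate over $(t,\infty)$ directly and reach the upper incomplete gamma in one step; your sign bookkeeping via $(1-\lambda)/(\lambda-1)=-1$, your justification of the sum--integral interchange, and your $t\to0$ check against the $r=1$ moment are all sound.
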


\begin{proof}
The mean residual life of a  distribution can be defined as 
\[
\mu(t)=\frac{1}{S(t)}.\left( E(Y) - \int_0 ^t y.f_{SMPtW}(y; \lambda, \phi, \beta )\right)-t
\]
We defined the survival function $S(t)=1-F(t)$ below
\[
S(t) = \frac{1-e^{(log \lambda)e^{-(y/\beta)^\phi}}}{1-\lambda}
\]
From equation \ref{rthmoment}, The first moment is  \[E(Y)=\dfrac{\beta log \lambda}{\lambda -1} \Gamma \left(\dfrac{1}{\phi} + 1\right)\sum_{j=0}^\infty \dfrac{(log \lambda)^j}{j !(j+1)^{1/\phi +1}}{j!}\]
and \[
\int_0 ^t y.f_{SMPtW}(y,\lambda, \phi, \beta) dy=\dfrac{\beta log \lambda}{\lambda -1} \sum_{j=0} ^\infty\dfrac{(log \lambda)^j}{j !(j+1)^{1/\phi +1}}\gamma \left( 1+\frac{1}{\phi}, (j+1)(t/\beta)^{\phi}\right)
\]
By simplification, we have that:

\begin{equation}
\mu(t)=\dfrac{-\beta log \lambda}{1-e^{(log \lambda)e^-(t/\beta)^\phi}} \sum_{j=0} ^\infty\dfrac{(log \lambda)^j}{j !(j+1)^{(1/\phi +1)}}\Gamma \left( 1+\frac{1}{\phi}, (j+1)(t/\beta)^{\phi}\right)-t
\end{equation}

\end{proof}

\subsection{Stress-Strength Reliability}

Suppose that $Y_1$ and $Y_2$ are continuous random variables where $Y_1 \sim SMPtW(\lambda_1, \phi_1, \beta_1)$ and  $Y_2 \sim SMPtW(\lambda_2, \phi_2, \beta_2),$ then, the stress-strength parameter, say R, is defined as

$$R=\int_{-\infty}^{\infty} f_1(y;\lambda_1, \phi_1, \beta_1)F_2(y;\lambda_2, \phi_2, \beta_2) dy  .$$
For instance, suppose $Y_1$ represents the strength of a component, independent of $Y_2$, the component's stress. If $Y_1 < Y_2$, then either the component fails or the system that uses the component may malfunction. Hence, reliability can be expressed as the probability that $Y_1 >Y_2.$ That is,

$$R=P(Y_1>Y_2)=\int_{-\infty}^{\infty} f_1(y;\lambda_1, \phi_1, \beta_1 )F_2(y;\lambda_2, \phi_2, \beta_2) dy.$$
Now, using the CDF and the PDF of SMPtW distribution in equations \ref{SMPtWcdf equation} and  \ref{SMPtWpdf equation}, we obtained;

$$R=\int_0^{\infty}\bigg(\dfrac{e^{log_{\lambda_1}e^{-(y/\beta_1)^{\phi_1}}}log(\lambda_1)*\phi_1e^{-(y/\beta_1)^{\phi_1}}(y/\beta_1)^{\phi_1-1}}{\lambda_1-1}\bigg)\bigg(\dfrac{e^{log(\lambda_2)e^{-(y/\beta_2)^{\phi_2}}}-\lambda_2}{1-\lambda_2}\bigg)dy$$

\begin{equation}
\begin{split}
R=&\dfrac{log(\lambda_1)\phi_1}{(\lambda_1-1)(1-\lambda_2)}\bigg[\int_0^{\infty}e^{[log(\lambda_1)e^{-(y/\beta_1)^{\phi_1}}-(y/\beta_1)^{\phi_1}+log(\lambda_2)e^{-(y/\beta_2)^{\phi_2}}]}(y/\beta_1)^{\phi_1-1} dy - \lambda_2 \\
& \times \int_0^{\infty}e^{[log(\lambda_1)e^{-(y/\beta_1)^{\phi_1}}-(y/\beta_1)^{\phi_1}]}(y/\beta_1)^{\phi_1-1} dy\bigg]
\label{rel}
\end{split}
\end{equation}

The stress-strength reliability above is not in a closed form if the shape parameters $\phi_1 \ne \phi_2$ and $\beta_1\neq \beta_2$ and thus, must be evaluated numerically. However, if the shape parameters $\phi_1=\phi_2,$ and $\beta_1 = \beta_2 $, we have a closed form, and the Stress-Strength Reliability is given as

\begin{equation}
R= \dfrac{log(\lambda_1)}{(\lambda_1-1)(1-\lambda_2)}\bigg[\dfrac{\lambda_1 \lambda_2-1}{log(\lambda_1\lambda_2)}-\lambda_2\dfrac{\lambda_1-1}{log(\lambda_1)}\bigg]
\end{equation}

The reliability above is obtained by setting  $t=(y/\beta)^{\phi}$ in equation \ref{rel} so that $dy=\frac{\phi}{\beta}t^{\phi-1}dt$.

\subsection{Order Statistics}
Suppose $y_{(1)}, y_{(2)}, ..., y_{(n)}$ are random samples obtained from SMPtW distribution. Let $Y_{(j;n)}$ be the $j^{th}$ order statistics with SMPtW PDF. Hence, the probability distribution function of $y_{(j;n)}$ denoted as $f_{(j:n)} (y;\lambda, \phi, \beta )$ is given as:

\begin{equation}
f_{(j:n)} (y;\lambda, \phi, \beta ) = \frac{n!}{(j-1)! (n-j)!}[F_{\text{SMPtW}}(y;\lambda, \phi, \beta)]^{j-1}[1-F_{\text{SMPtW}}(y;\lambda, \phi, \beta)]^{n-j}f_{\text{SMPtW}}(y;\lambda, \phi, \beta) 
\label{SMPtWorderstatistics}
\end{equation}
Using the PDF and CDF of SMPtW distribution in equations \ref{SMPtWpdf equation} and \ref{SMPtWcdf equation} respectively, the $j^{th}$ ordered statistics of SMPtW distribution is derived following from equation \ref{SMPtWorderstatistics} as:

\begin{equation*}
\begin{split}
f_{(j:n)}(y;\lambda, \phi, \beta) =& \frac{y^{\phi-1}n!\phi (log \lambda) }{\beta(\lambda-1)(j-1)! (n-j)!} \left[\frac{e^{{\left( {\log \lambda } \right)} {  e^{-\left( \frac{y}{\beta }\right)^\phi} }}- \lambda} {1-\lambda}\right]^{j-1}  \\
\times
&\left[1 -\frac{e^{{\left( {\log \lambda } \right)} {  e^{-( \frac{y}{\beta })^\phi} }} +\lambda}{1-\lambda}\right]^{n-j} e^{[{\left( {\log \lambda } \right)} {  e^{-( \frac{y}{\beta })^\phi } } - \left(\frac{y}{\beta }\right)^\phi ]}
\end{split}
\end{equation*}

\begin{equation}
\begin{split}
f_{(j:n)} (y;\lambda, \phi, \beta)  = &\frac{n!\phi (log \lambda) y^{\phi -1 } \left[ 1-\lambda \right] ^{1-n}}{\beta^\phi (\lambda-1)(j-1)! (n-j)!} [e^{{\left( {\log \lambda } \right)} {  e^{-\left( \frac{y}{\beta }\right)^\phi} }}- \lambda]^{j-1} \\
&\times [1-e^{{\left( {\log \lambda } \right)} {  e^{-\left( \frac{y}{\beta }\right)^\phi} }}]^{n-j}e^{\left[{\left( {\log \lambda } \right)} {  e^{-\left( \frac{y}{\beta }\right)^\phi } } - \left(\frac{y}{\beta }\right)^\phi \right]}
\label{SMPtWorderstatistics_equation}
\end{split}
\end{equation}

We can then obtain the first and nth order statistics using equation \ref{SMPtWorderstatistics_equation}. Consequently, the first and nth order statistics are given in equations \ref{SMPtWorderstatistics1} and \ref{SMPtWorderstatistics2} respectively.

\begin{equation}
f_{(1:n)} (y;\lambda, \phi, \beta ) = \frac{n! \phi (log \lambda) y^{\phi -1 } \left( 1-\lambda \right) ^{1-n}}{\beta^\phi(\lambda-1)}
 \left[1-e^{{\left( {\log \lambda } \right)} {  e^{-\left( \frac{y}{\beta }\right)^\phi} }}\right]^{n-1} e^{\left[{\left( {\log \lambda } \right)} {  e^{-\left( \frac{y}{\beta }\right)^\phi } } - \left(\frac{y}{\beta }\right)^\phi \right]}
\label{SMPtWorderstatistics1}
\end{equation}

\begin{equation}
f_{(n:n)} (y;\lambda, \phi, \beta) = \frac{n! \phi (log \lambda) y^{\phi -1 } \left( 1-\lambda \right) ^{1-n}}{\beta^\phi(\lambda-1)}
 \left[e^{{\left( {\log \lambda } \right)} {  e^{-\left( \frac{y}{\beta }\right)^\phi} }} - \lambda \right]^{n-1} e^{\left[{\left( {\log \lambda } \right)} {  e^{-\left( \frac{y}{\beta }\right)^\phi } } - \left(\frac{y}{\beta }\right)^\phi \right]}
\label{SMPtWorderstatistics2}
\end{equation}

\subsection{Rényi Entropy}
In this section, we derived the Rényi entropy denoted as $E_r (y; \lambda, \phi, \beta )$ given by

\begin{equation}
    E_r (y; \lambda, \phi, \beta ) =\frac{1}{1-h} log\left (\int_{0}^{\infty} f_{\text{SMPtW}}(y;\lambda, \phi, \beta )^h dy \right); \quad \quad h>0, h \ne 1
    \label{SMPtWrenyi}
\end{equation}

\begin{theorem}
The Rényi entropy of a random variable which follows the SMPtW distribution is given as:

\begin{equation*}
  E_r (y; \lambda, \phi, \beta)  = \frac{1}{1-h} \text{log} \left[\left( \frac{ (\log \lambda) \phi }{\beta (\lambda - 1)} \right)^h 
\frac{\beta}{\phi} \, \Gamma\Big(\frac{\phi h - h + 1}{\phi}\Big) 
\sum_{j=0}^{\infty} \frac{(h \log \lambda)^j}{j!} (h+j)^{- \frac{\phi h - h + 1}{\phi}} \right ]
\end{equation*}
where $\beta > 0, \phi >0$ and $h+j>0$
\end{theorem}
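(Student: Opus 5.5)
Following the strategy used for the $r$th moment in equation \ref{rthmoment}, we compute $\int_0^\infty f_{SMPtW}(y;\lambda,\phi,\beta)^h\,dy$ directly and then insert it into the definition \ref{SMPtWrenyi}. We take $\lambda>0$, $\lambda\neq 1$, since $\lambda=1$ is the ordinary Weibull case. From equation \ref{SMPtWpdf equation},
\[
f_{SMPtW}(y)^h=\left(\frac{(\log\lambda)\phi}{\beta(\lambda-1)}\right)^h\left(\frac{y}{\beta}\right)^{h(\phi-1)} e^{-h(y/\beta)^\phi}\, e^{h(\log\lambda)e^{-(y/\beta)^\phi}} .
\]
The prefactor is positive because $\log\lambda$ and $\lambda-1$ have the same sign. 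The constant $\left(\frac{(\log\lambda)\phi}{\beta(\lambda-1)}\right)^h$ therefore comes out of the integral immediately.

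\emph{Step 1: substitution.} Put $t=(y/\beta)^\phi$, so that $y=\beta t^{1/\phi}$ and $dy=\frac{\beta}{\phi}t^{1/\phi-1}\,dt$. This turns $(y/\beta)^{h(\phi-1)}$ into $t^{h(\phi-1)/\phi}$. Combining powers of $t$ gives
\[
t^{\frac{h(\phi-1)}{\phi}+\frac1\phi-1}=t^{\frac{\phi h-h+1}{\phi}-1}.
\]
The integral becomes
\[
\frac{\beta}{\phi}\int_0^\infty t^{\frac{\phi h-h+1}{\phi}-1}e^{-ht}e^{(h\log\lambda)e^{-t}}\,dt .
\]

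\emph{Step 2: series expansion.} Expand
\[
e^{(h\log\lambda)e^{-t}}=\sum_{j\ge0}\frac{(h\log\lambda)^j}{j!}e^{-jt},
\]
exactly as in the proof of the $r$th moment. Then interchange sum and integral. Each remaining integral is $\int_0^\infty t^{a-1}e^{-(h+j)t}\,dt$ with $a=\frac{\phi h-h+1}{\phi}$. The substitution $v=(h+j)t$ evaluates it as $\Gamma(a)(h+j)^{-a}$. Collecting the prefactor, $\beta/\phi$, $\Gamma(a)$ and the series yields the bracket in the statement, and applying $\frac{1}{1-h}\log(\cdot)$ finishes the proof.

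\emph{Main obstacle: justifying the steps.} The algebra is routine; the care goes into two points.
\begin{itemize}
\item \textbf{Convergence.} The Gamma integral converges only when $a>0$, i.e.\ $\phi h-h+1>0$. This is automatic for $\phi\ge1$, but for $\phi<1$ it requires $h<1/(1-\phi)$. I would state this condition explicitly alongside the stated $h+j>0$, which holds for all $j\ge0$ since $h>0$.
\item \textbf{Interchange of sum and integral.} When $\lambda>1$ all terms are nonnegative and Tonelli applies. When $0<\lambda<1$ the terms alternate in sign, so I would instead dominate the series by $e^{h|\log\lambda|e^{-t}}\le\lambda^{-h}$ and use Fubini/dominated convergence. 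The resulting sum $\sum_j |h\log\lambda|^j(h+j)^{-a}/j!$ converges absolutely by comparison with an exponential series.
\end{itemize}
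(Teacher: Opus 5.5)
Your proposal is correct and follows essentially the same route as the paper: factor out the constant, expand $e^{(h\log\lambda)e^{-(y/\beta)^\phi}}$ as a power series, and integrate termwise after the substitution $t=(y/\beta)^\phi$ to obtain $\frac{\beta}{\phi}\Gamma(a)(h+j)^{-a}$ with $a=(\phi h-h+1)/\phi$. The only difference is that the paper expands before substituting, whereas you substitute first. Your extra care is a genuine improvement: the paper never justifies the sum--integral interchange, and it omits the necessary condition $\phi h-h+1>0$ (equivalently $h<1/(1-\phi)$ when $\phi<1$), without which the integral diverges at $y=0$.
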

\begin{proof}

Rényi entropy is given by

\begin{equation}
    E_r (y; \lambda, \phi, \beta ) =\frac{1}{1-h} log\left (\int_{0}^{\infty} f_{\text{SMPtW}}(y;\lambda, \phi, \beta )^h dy \right); \quad \quad h>0, h \ne 1
   \end{equation}
Now, using the PDF given in equation \ref{SMPtWpdf equation}, we have;
\begin{equation*}
      E_r (y; \lambda, \phi, \beta) =\frac{1}{1-h} log\left (\int_{0}^{\infty} \left[ \frac{{e^{{\left( {\log \lambda } \right)} {  e^{-\left( \frac{y}{\beta }\right)^\phi} }}} }{{\lambda  - 1}} (\log \lambda) \frac{\phi }{\beta }\left( {\frac{y}{\beta }} \right)^{\phi  - 1} e^{ - \left( {\frac{y}{\beta }} \right)^\phi}\right]^h dy \right)
\end{equation*}

\begin{equation*}
      E_r (y; \lambda, \phi, \beta ) =\frac{ 1 }{1-h} log\left ( {\left[\frac{\phi log \lambda}{ \beta (\lambda - 1)}\right]}^h \int_{0}^{\infty} \left(\frac{y}{\beta}\right) ^{ h(\phi -1)} e^{-h \left(\frac{y}{\beta}\right)^\phi} e^{{h\left(\log \lambda \right)}e^{-\left(\frac{y}{\beta}\right)^ \phi}} dy \right)
\end{equation*}
Using the Taylor series expansion 
\begin{equation*}
    e^{{h\left(\log \lambda \right)}e^{-\left(\frac{y}{\beta}\right)^ \phi}} = \sum_{j=0}^\infty \dfrac{(hlog \lambda)^j}{j !} e^{-j \left(\frac{y}{\beta}\right)^\phi}
\end{equation*}
we have:
\begin{equation*}
        E_r (y; \lambda, \phi, \beta ) =\frac{ 1 }{1-h} log\left ( {\left[\frac{\phi log \lambda}{ \beta (\lambda - 1)}\right]}^h \int_{0}^{\infty} \left(\frac{y}{\beta}\right) ^{ h(\phi -1)} e^{-h \left(\frac{y}{\beta}\right)^\phi} \sum_{j=0}^\infty \dfrac{(hlog \lambda)^j}{j !} e^{-j \left(\frac{y}{\beta}\right)^\phi} dy \right)
\end{equation*}
\begin{equation*}
        E_r (y; \lambda, \phi, \beta ) =\frac{ 1 }{1-h} log\left ( {\left[\frac{\phi log \lambda}{ \beta (\lambda - 1)}\right]}^h \sum_{j=0}^\infty \dfrac{(hlog \lambda)^j}{j !}\int_{0}^{\infty} \left(\frac{y}{\beta}\right) ^{ h(\phi -1)} e^{-h \left(\frac{y}{\beta}\right)^\phi}  e^{-j \left(\frac{y}{\beta}\right)^\phi} dy \right)
\end{equation*}
\begin{equation}
        E_r (y; \lambda, \phi, \beta ) =\frac{ 1 }{1-h} log\left ( {\left[\frac{\phi log \lambda}{ \beta (\lambda - 1)}\right]}^h \sum_{j=0}^\infty \dfrac{(hlog \lambda)^j}{j !}\int_{0}^{\infty} \left(\frac{y}{\beta}\right) ^{ h(\phi -1)} e^{-(h+j) \left(\frac{y}{\beta}\right)^\phi} dy \right)
        \label{renyi sub}
\end{equation}
To simplify, let $ z = \left( \frac{y}{\beta} \right)^\phi$,

\noindent Therefore, $ y = \beta z^{(1/\phi)}$ \quad $dy = \frac{\beta}{\phi} z^{ \frac{1}{\phi} - 1}dz$ and $ \left( \frac{y}{\beta} \right)^{h(\phi-1)} = z^{\frac{h(\phi - 1)}{\phi}}$

\noindent Hence, 
$$ \int_{0}^{\infty} \left(\frac{y}{\beta}\right) ^{ h(\phi -1)} e^{-(h+j) \left(\frac{y}{\beta}\right)^\phi} dy = \frac{\beta}{\phi} \int_0^\infty z^{\frac{h(\phi - 1)}{\phi} +\frac{1}{\phi}-1} e^{-(h+j)z} dz$$

$$ \int_{0}^{\infty} \left(\frac{y}{\beta}\right) ^{ h(\phi -1)} e^{-(h+j) \left(\frac{y}{\beta}\right)^\phi} dy = \frac{\beta}{\phi} \int_0^\infty z^{\frac{h\phi - h +1}{\phi} -1} e^{-(h+j)z} dz$$

\noindent Using the general gamma function formula, we have:
\begin{equation}
\int_{0}^{\infty} \left(\frac{y}{\beta}\right) ^{ h(\phi -1)} e^{-(h+j) \left(\frac{y}{\beta}\right)^\phi} dy = \frac{\beta}{\phi} \Gamma \left(\frac{h\phi - h +1}{\phi}\right) (h+j)^{- \left(\frac{h\phi - h +1}{\phi}\right)}
\label{gamma}
\end{equation}

Substituting equation \ref{gamma} in equation \ref{renyi sub}, we obtained the Rényi entropy for SMPtW distribution as 
\begin{equation}
  E_r (y; \lambda, \phi, \beta)  = \frac{1}{1-h} \text{log} \left[\left( \frac{ (\log \lambda) \phi }{\beta (\lambda - 1)} \right)^h 
\frac{\beta}{\phi} \, \Gamma\Big(\frac{\phi h - h + 1}{\phi}\Big) 
\sum_{j=0}^{\infty} \frac{(h \log \lambda)^j}{j!} (h+j)^{- \frac{\phi h - h + 1}{\phi}} \right ]
\end{equation}
\end{proof}
\section{Maximum Likelihood Estimation}
The likelihood function for the SMPtW distribution obatined as:
\begin{equation*}
    L(\theta|\boldsymbol{y})= \prod_{i=1}^nf(y_i|\lambda, \phi, \beta)
\end{equation*}

\begin{equation*}
    L= \prod_{i=1}^n \dfrac{e^{[(log \lambda)e^{-(y_i/\beta)^{\phi}}-(y_i/\beta)^{\phi}]}(log \lambda) (\phi/\beta) (y_i/\beta)^{\phi -1 }}{\lambda-1}
\end{equation*}

Hence, $\log$ likelihood function is given by;
\begin{equation}
\begin{split}
    Q=log L=&log \lambda \sum_{i=1}^ne^{-(y_i/\beta)^{\phi}} + nlog(log\lambda)+nlog(\phi/\beta)+(\phi - 1)\sum_{i=1^n}( log (y_i/ \beta)) \\
    &- \sum_{i=1}^n(y_i/\beta)^{\phi}-nlog(\lambda-1)
    \label{loglik}
    \end{split}
\end{equation}

The MLEs of $\lambda$, $\phi$ and $\beta$ are obtained by partially differentiating $Q$ in equation \ref{loglik} with respect to $\lambda$, $\phi$ and $\beta$ and equating to zero as follows:

\begin{equation}
    \frac{\partial Q}{\partial \lambda}=\frac{1}{\lambda}\sum_{i=1}^ne^{-(y_i/\beta)^{\phi}}+\frac{n}{\lambda log \lambda}+\frac{n}{\lambda-1}=0
    \label{partiallambda}
\end{equation}

\begin{equation}
    \frac{\partial Q}{\partial \phi}=-(log \lambda) \sum_{i=1}^n (y_i/\beta)^{\phi} log (y_i/\beta)  e^{(y_i/\beta)^{\phi}}+\frac{n\beta}{\phi}+\sum_{i=1}^nlog (y_i/\beta) - \sum_{i=1}^n(y_i/\beta)^{\phi}log (y_i/\beta) 
       \label{partialphi}
\end{equation}

\begin{equation}
    \frac{\partial Q}{\partial \beta}=-\frac{\phi log \lambda }{\beta}\sum_{i=1}^n(y_i/\beta)^{\phi}e^{-(y_i/\beta)^{\phi}}-\frac{1}{\beta}+\frac{\phi}{\beta}\sum_{i=1}^n(y_i/\beta)^{\phi}+\frac{n(\phi-1)}{\beta}
    \label{partialbeta}
\end{equation}

Note that there are no closed form for the above partial derivatives. The MLEs $(\hat{\lambda}, \hat{\phi}, \hat{\beta})$ are obtained by simultaneously solving equations \ref{partiallambda}, \ref{partialphi} and \ref{partialbeta} using numerical methods such as Secant,  Regula-Falsi or Newton-Raphson.

\section{Simulation study}

\subsection{Random number generation} \label{random}
In generating random numbers for the SMPtW distribution, the inverse transform approach is adopted in this study. This method has been widely used by several authors (e.g. \citet{adesegun2023transmuted}, \citet{menberu2024transformed}) in generating random numbers in distribution theory. The algorithm for generating random numbers from the SMPtW distribution using the inverse transform approach is given as:

\begin{algorithm}
\caption{Random number generation from SMPtW distribution}\label{algo1}
(i) Generate a random variable say $U$ such that $U \sim U( 0, 1)$

(ii) Obtain the inverse of the CDF of the SMtW distribution

(iii) Apply  $X = F(y, \lambda, \phi, \beta )^{-1} (U)$ 

(iv) Then $X \sim F(y; \lambda, \phi, \beta )$

(v) Repeat until the required number of samples are obtained
\end{algorithm}
\subsection{Simulation result}
In this section, we examined the performance of Maximum likelihood methods in estimating the parameters of the SMPtW distribution. Random samples of size 50, 100, 250, 500 and $1,000$ were simulated from SMPtW distribution using the inverse transform method discussed in subsection \ref{random} over $1,000$ iterations each. We considered the parameter combinations; $ (\lambda = 0.3, \phi =0.5, \beta = 0.2),  (\lambda = 1, \phi =1, \beta = 1.5),  (\lambda = 1, \phi =0.5, \beta = 1), (\lambda = 3, \phi = 4, \beta =0.5) $, and $(\lambda = 1.5, \phi = 2.5, \beta =0.5) $. The shape parameter ($\phi$) values were chosen to reflect heavy, moderate and light tail behaviour of the distribution. Also, the scale ($\beta$) was selected to reflect the compressed, natural and stretched nature of the distribution, while the transformation parameter ($\lambda$) was chosen to consider exact Weibull and sub-cases of the distribution.  We evaluated the MLE estimates based on the bias and mean square error. 
\begin{table}[h]
\centering
\caption{Parameter estimates for the simulated data}\label{simulation}%
\begin{tabular}{@{}llllllllllllll@{}}
\toprule
S/N & \multicolumn{3}{c}{Parameters}  & Size & \multicolumn{3}{c}{Estimates}  & \multicolumn{3}{c}{Bias}  & \multicolumn{3}{c}{MSE} \\
\midrule
& $\lambda$ & $\phi$ & $\beta$  & n &  $\hat{\lambda}$ & $\hat{\phi}$ & $\hat{\beta}$  &  $\hat{\lambda}$ & $\hat{\phi}$ & $\hat{\beta}$  & $\hat{\lambda}$ & $\hat{\phi}$ & $\hat{\beta}$\\
\midrule
1& 0.3 & 0.5 & 0.2 & 50 & 18.5276 & 0.5445 & 0.4969 & 18.2276 & 0.0445 & 0.2969 
     & 43049.44 & 0.0176 & 0.8757  \\
&&& &100 & 22.63 & 0.5307 & 0.4231 
     & 22.3335 & 0.0307 & 0.2231 
     & 58641.94 & 0.0122 & 0.7180 \\
&&& &250 & 34.5524 & 0.5168 & 0.3232 
     & 34.2524 & 0.0168 & 0.1232 
     & 593216.03 & 0.0074 & 0.3650 \\
&& & &500 & 9.5389  & 0.5055 & 0.2641 
     & 9.2389  & 0.0055 & 0.0641 
     & 28536.28 & 0.0047 & 0.2361 \\
&& & &1000 & 0.4248  & 0.5039 & 0.2222 
     & 0.1248  & 0.0039 & 0.0222 
     & 0.1741 & 0.0028 & 0.0108 \\

\botrule
2& 1.0  & 1.0 & 1.5 & 50 & 13.4553 & 0.9963 & 1.7100  & 12.4553 & -0.0037 & 0.2100 
     & 7653.99 & 0.0404 & 1.2903 \\
&&& &100 & 17.6098 & 0.9856 & 1.6297 
     & 16.6098 & -0.0144 & 0.1297 
     & 24682.96 & 0.0314 & 1.0377 \\
&&& &200 & 52.9653 & 0.9824 & 1.6081 
     & 51.9653 & -0.0176 & 0.1081 
     & 382859.75 & 0.0193 & 0.9306 \\
&& & &500 & 30.8381 & 0.9838 & 1.5342 
     & 29.8381 & -0.0162 & 0.0342 
     & 295017.80 & 0.0129 & 0.5424 \\
&& & &1000 & 1.1898  & 0.9931 & 1.5021 
     & 0.1898  & -0.0069 & 0.0021 
     & 0.6921 & 0.0066 & 0.0950 \\
\botrule
3& 1.0 & 0.5 & 1.0 & 50  & 13.2051 & 0.4860 & 1.7427   & 12.2051 & -0.0140 & 0.7427 
     & 7662.86 & 0.0128 & 10.7227 \\
&&& &100 & 17.2258 & 0.4814 & 1.5143 
     & 16.2258 & -0.0186 & 0.5143 
     & 24675.78 & 0.0099 & 10.9227 \\
&&& &250 & 52.8851 & 0.4876 & 1.5271 
     & 51.8851 & -0.0124 & 0.5271 
     & 382847.61 & 0.0053 & 16.0094 \\
&& & &500 & 30.8320 & 0.4913 & 1.2821 
     & 29.8320 & -0.0087 & 0.2821 
     & 295018.26 & 0.0033 & 9.2383 \\
&& & &1000 & 1.1874  & 0.4963 & 1.0434 
     & 0.1874  & -0.0037 & 0.0434 
     & 0.6909 & 0.0017 & 0.1671 \\
\botrule
4& 3.0 & 4.0 & 2.0 & 50 & 23.4346 & 3.7528 & 1.8978 & 20.4346 & -0.2472 & -0.1022 
     & 65374.64 & 0.6560 & 0.1290 \\
&&& &100 & 17.3497 & 3.7796 & 1.9197 
     & 14.3497 & -0.2204 & -0.0803 
     & 9733.76 & 0.4960 & 0.0987 \\
&&& &250 & 40.2449 & 3.8301 & 1.9591 
     & 37.2449 & -0.1699 & -0.0409 
     & 124896.39 & 0.3172 & 0.0693 \\
&& & & 500& 46.6442 & 3.9110 & 1.9829 
     & 43.6442 & -0.0890 & -0.0171 
     & 289260.07 & 0.1605 & 0.0427 \\
&& & &1000 & 3.4999  & 3.9731 & 1.9926 
     & 0.4999  & -0.0269 & -0.0074 
     & 5.4673 & 0.0592 & 0.0120 \\
\botrule
5& 1.5 & 2.5 & 0.5 & 50 & 16.3178 & 2.4380 & 0.4932   & 14.8178 & -0.0620 & -0.0068 
     & 24447.89 & 0.2370 & 0.0157 \\
&&& &100 & 69.2060 & 2.4172 & 0.4917 
     & 67.7060 & -0.0828 & -0.0083 
     & 581282.37 & 0.1915 & 0.0144 \\
&&& &250 & 50.9219 & 2.4271 & 0.4945 
     & 49.4219 & -0.0729 & -0.0055 
     & 1075946.55 & 0.1176 & 0.0087 \\
&& & &500 & 16.9683 & 2.4516 & 0.4948 
     & 15.4683 & -0.0484 & -0.0052 
     & 52728.26 & 0.0716 & 0.0053 \\
&& & &1000 & 1.7479  & 2.4808 & 0.4972 
     & 0.2479  & -0.0192 & -0.0028 
     & 1.3381 & 0.0335 & 0.0019 \\
\botrule
\end{tabular}
\end{table}

The results of the simulation are presented in Table \ref{simulation}. As the sample size increases, the estimates approach the true parameter values simulated. The Bias and MSE also reduce as the sample sizes increase for $\phi$ and $\beta$, but some inconsistencies were noticed for $\lambda$. See the discussion section for details.  

\section{Application with Health data.}
In this section, we compared the proposed SMPtW distribution with the base Weibull distribution and some similar extensions of  Weibull distribution previously studied in the literature. These distributions include Exponentiated Weibull \citep{pal2006exponentiated}, transmuted Weibull \citep{aryal2011transmuted} and Sine alpha power Weibull \citep{alghamdi2025sine} distributions. The PDFs of the distributions compared with that of the SMPtW distribution are given in Table \ref{compared distributions}. 

The data used in showing the supremacy of SMPtW distribution over other similar distributions consists of seventy-six Kelvar 373/epoxy fatigue fracture (\citet{zhao2023novel}, \citet{owoloko2015performance} and \cite{rasool2025innovative}). The performance of the fitted distributions were assessed based on Akaike information criteria (AIC), Bayesian information criteria (BIC), corrected Akaike information criteria (AICc), Hannan–Quinn information criterion (HQIC), Kolmogorov-Smirnov (K-S) statistics and corresponding p-values.

\begin{table}[h]
\caption{PDF of compared distributions}\label{compared distributions}%
\begin{tabular}{@{}llll@{}}
\toprule
SN & Distributions & Probability distribution function \\
\midrule
1 & Weibull    & $f(y) = 
 \frac{\phi }{\beta }\left( {\frac{y}{\beta }} \right)^{\phi  - 1} e^{ - \left( {\frac{y}{\beta}} \right)^\phi  } \quad y\geq 0.$    \\
2 & Exponentiated Weibull    & $
f(y) = \beta \phi \lambda y^{\phi - 1}
e^{-\lambda y^{\phi}}
\left(1 - e^{-\lambda y^{\phi}}\right)^{\beta - 1},
\quad y \ge 0. $ \\
3 & Transmuted Weibull   &  $
f(y) =
\phi \lambda y^{\phi - 1}
e^{-\lambda y^{\phi}}
\left[
1 + \beta - 2\beta \left(1 - e^{-\lambda y^{\phi}}\right)
\right],
\quad y \ge 0.
$   \\
4 &  Sine alpha power Weibull    & $
f(y)
=
\frac{\pi}{2}
\,\lambda \beta
\left(\frac{\log(\phi)}{\phi - 1}\right)
y^{\beta-1}
e^{-\lambda y^{\beta}}
\phi^{\,1 - e^{-\lambda y^{\beta}}}
\cos\left[
\frac{\pi}{2}
\left(
\frac{1 - \phi^{\,1 - e^{-\lambda y^{\beta}}}}{1 - \phi}
\right)
\right],
\quad y>0.$  \\
5 & SMP transformed Weibull  &  $
f(y) = \,\left\{ \begin{array}{l}
  \frac{{e^{{\left( {\log \lambda } \right)} {  e^{-\left( \frac{y}{\beta }\right)^\phi} }}} }{{\lambda  - 1}} (\log \lambda) \frac{\phi }{\beta }\left( {\frac{y}{\beta }} \right)^{\phi  - 1} e^{ - \left( {\frac{y}{\beta }} \right)^\phi}\,\,\,\,\,\,\,\,\,\,if\,\lambda  > 0,\,\,\lambda  \ne 1, \\ 
\frac{\phi }{\beta }\left( {\frac{y}{\beta }} \right)^{\phi  - 1} e^{ - \left( {\frac{y}{\beta }} \right)^\phi  }\,\,\,\,\,\,\,\,\,\,\,\,\,\,\,\,\,\,\,\,\,\,\,\,\,\,\,\,\,\,\,\,\,\,\,\,\,\,\,\,\,\,\,\,if\,\lambda  = 1\,\, \\ 
 \end{array} \right.
$  \\
\botrule
\end{tabular}
\end{table}
\begin{table}[h]
\caption{MLE Estimates for Fracture data with their standard errors in brackets}\label{fracture}%
\begin{tabular}{@{}llllll@{}}
\toprule
SN & Distributions & \multicolumn{3}{c}{Parameters}  \\
\midrule
&     &  $\hat{\lambda}$       & $\hat{\phi} $ & $\hat{ \beta }$  \\
\midrule
1 & Two parameter Weibull    &    & 1.3256 (0.1138)  & 2.1328 (0.1944) \\
2 & Exponentiated Weibull    & 1.4427 (0.6436) &  1.1013 (0.2628) & 1.6409 (0.5871) \\
3 & Transmuted Weibull   &  -0.7955 (0.2130) & 1.0509 (0.1262)  & 1.4419 (0.2120) \\
4 &  Sine alpha power Weibull  & 0.0546 (0.1537) & 1.4122 (0.1290) & 6.4225 (3.9454) \\
5 & SMP transformed Weibull  &  0.0102 (0.0184) & 0.7849 (0.1465)  & 0.7061 (0.2919)  \\
\botrule
\end{tabular}
\label{comparison 2}
\end{table}

\begin{table}[h]
\caption{Evaluation metrics result for data on fracture}\label{Fracture comparison}%
\begin{tabular}{@{}lllllllll@{}}
\toprule
SN & Distributions & \multicolumn{6}{c}{Evaluations metrics}  \\
&     &  AIC     & BIC & AICs &  HQIC & KS & P-Value\\
\midrule
1 & Two parameter Weibull    &    249.0494 & 253.7108 & 249.2138 & 250.9123 & 0.1099 & 0.2953\\
2 & Exponentiated Weibull    & 250.3272 & 257.3194 & 250.6606 & 253.1216 & 0.0988 &  0.4217\\
3 &  Sine alpha power Weibull  & 249.0571 & 256.0493 &  249.3904 & 251.8515 & 0.0919 &  0.5133 \\
4 & Transmuted Weibull   &   248.8600 & 255.8522 & 249.1933 &251.6544 & 0.0988& 0.4213\\
5 & SMP transformed Weibull  & 247.3419 & 254.3341 & 247.6752 & 250.1363 & 0.0881 & 0.5668 \\
\botrule
\end{tabular}
\label{MLE 2}
\end{table}
Tables \ref{comparison 2} and \ref{MLE 2} show the MLE and comparison metrics. The result showed that the three-parameter SMP transformed Weibull distribution performed best among the competing models based on the evaluation metrics.

\begin{figure}[H]
\centering
\begin{minipage}{0.45\textwidth}
    \centering
    \includegraphics[width=\linewidth]{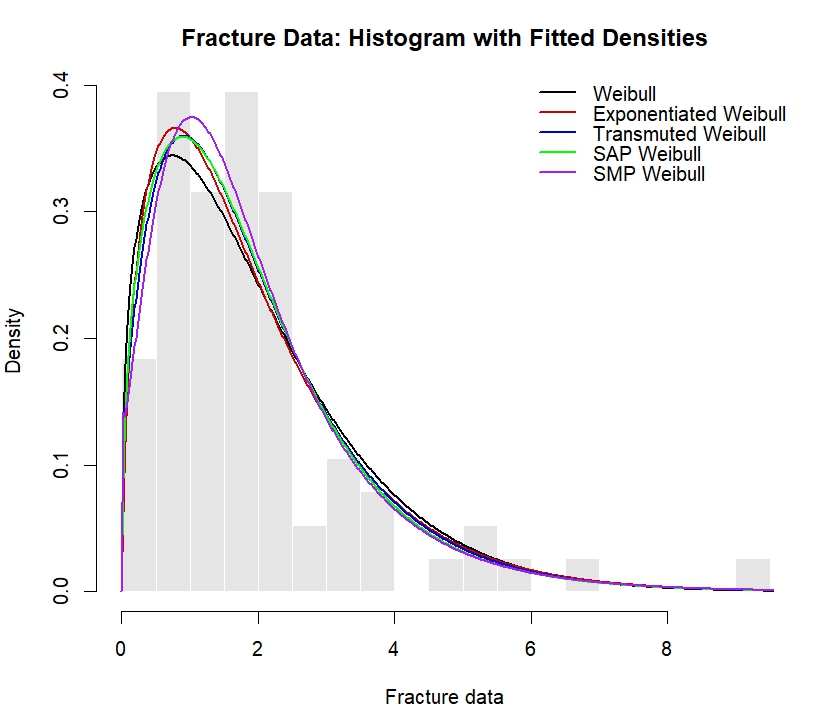}
    \caption{Histogram with Fitted Densities}
    \label{fig5}
\end{minipage}\hfill
\begin{minipage}{0.45\textwidth}
    \centering
    \includegraphics[width=\linewidth]{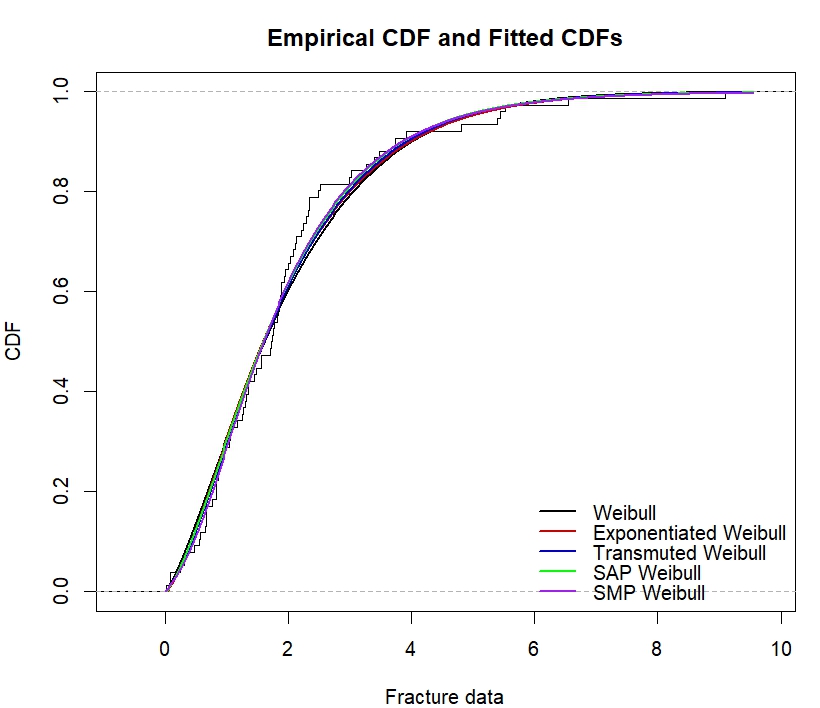}
    \caption{Empirical CDF and Fitted CDFs}
    \label{fig6}
\end{minipage}
\end{figure}

The histogram with the fitted densities of the distributions is shown in Figures \ref{fig5} while the empirical CDF versus the fitted CDF plot is shown in Figure \ref{fig6}. It was observed that the SMPtW distribution fits the data best among the competing models.

\begin{figure}[H]
\centering
   \includegraphics[width=0.8\linewidth]{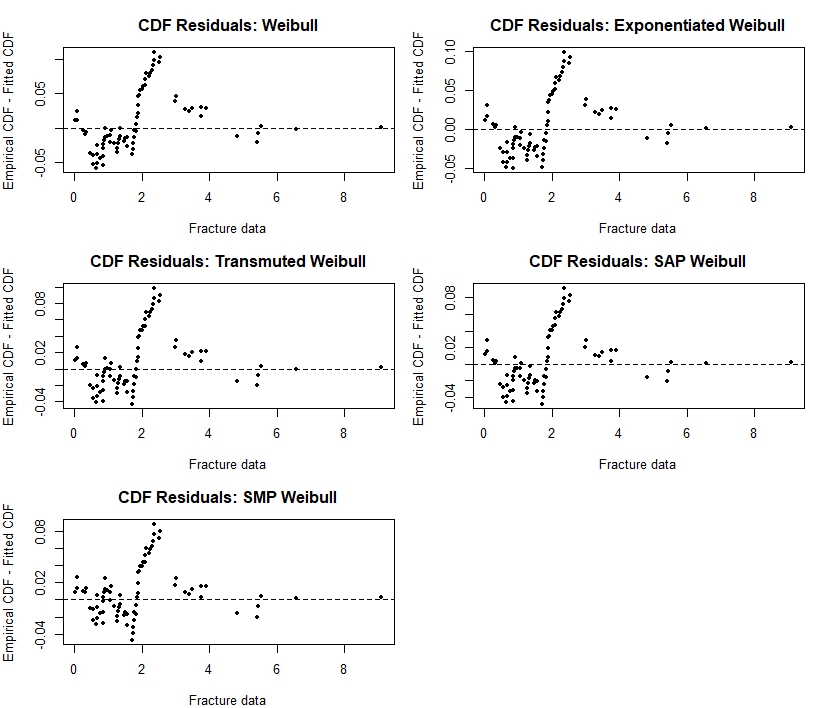}
    \caption{Cumulative Distribution Function Residuals of the fitted Distributions}
    \label{fig7}
\end{figure}

Figure \ref{fig7} shows the  CDF residuals for the fitted distributions. SMPtW CDF residuals tend to be normally distributed compared to other fitted distributions, corroborating the fact that SMPtW fits the data best among the competing models. 

\section{Discussion}

The simulation result in Table \ref{simulation} highlight a clear distinction in the finite-sample behaviour of the maximum likelihood estimators for $(\lambda,\phi,\beta)$ under the SMPtW model. The baseline Weibull parameters $\phi$ and $\beta$ are consistently well-estimated across all scenarios, exhibiting small bias and rapidly decreasing mean squared error (MSE) as the sample size increases, with estimates concentrating around their true values even in moderate samples; this reflects strong identifiability of the underlying Weibull structure. In contrast, the transformation parameter $\lambda$ shows substantial bias and pronounced variability, particularly in small and moderate samples, with MSE values that remain large relative to those of $\phi$ and $\beta$; although performance improves with increasing sample size, convergence is noticeably slower and less stable. Future studies could consider robust estimation techniques, which would be valuable in addressing potential numerical challenges associated with the distribution.

Table \ref{MLE 2} provides a comprehensive comparison of the fitted model to the fracture dataset using AIC, BIC, AICc, HQIC, K-S and associated p-values as evaluation metrics. The classical Weibull model performs poorly across all criteria, with an AIC of 294.0494, which is substantially larger than all competing models. This clearly indicates that the baseline Weibull model is inadequate for the fracture data. Among the extended models, the exponentiated Weibull model records an AIC of 250.3272, which is higher than several competitors, suggesting that its particular form of extension does not align well with the structure of the data. The transmuted Weibull and sine alpha power Weibull models perform better, with AIC values of 248.8600 and 249.0571, respectively. These results indicate that introducing additional shape flexibility improves model fit, although the extent of improvement varies across transformation mechanisms.

The proposed SMP transformed Weibull model achieves the lowest AIC value of 247.3419. Importantly, the SMP transformed Weibull model is consistently ranked first across all four criteria (AIC, BIC, AICc, HQIC), which strengthens the reliability of this conclusion. Not only that, it has the lowest K-S values (0.0881) with the highest P-value (0.5668), which shows that it has the smallest and statistically significant deviation between the empirical and the fitted CDFs as shown in Figures \ref{fig5} and \ref{fig6}.

The superior performance of the SMP transformed Weibull model indicates that its transformation mechanism is particularly well-suited to capture these complexities in the data. Unlike standard extensions that adjust the baseline distribution in a relatively rigid manner, the SMP transformation introduces a more flexible deformation of the distribution, allowing it to better adapt to variations in both the central body and the tails of the data. Another important observation is that all extended Weibull-type models outperform the classical Weibull distribution. This reinforces the idea that additional flexibility is not optional but necessary when modelling real-world lifetime data. However, the extent of improvement varies, and the SMP transformed Weibull model provides the most consistent and favourable performance across all evaluation criteria.

What distinguishes the proposed model is not simply the addition of parameters, but the manner in which the transformation reshapes the distribution. This allows for greater adaptability in modelling complex phenomena where traditional assumptions about non-monotonic hazard rates and simple tail structures are often violated.

\section{Conclusion}
Prior to introducing the novel distribution of interest, we gave an update on some methods of extending distributions recently proposed in the literature that are yet to gain much attention. We also gave a brief review of transformed distributions through SMP method. This is to give the gaps at a glance on base distributions that are yet to be explored via this method. While the base two-parameter Weibull distribution remains a foundational tool in statistical modelling, its structural limitations necessitate the development of more flexible alternatives. We introduced and studied a novel three-parameter Weibull distribution obtained via SMP technique, named SMPtW distribution. The properties of this distribution were derived and studied extensively. Through the inverse transform random number generation approach, we evaluated the consistency of MLE estimates for the parameters of SMPtW distribution via simulation. Bias of estimates and MSE were used as evaluation metrics. Furthermore, we showed the supremacy of the SMPtW distribution over other similar distributions in modelling health data using AIC, BIC, AICc, HQIC, K-S and corresponding p-values. It is our expectation that modellers can make use of this distribution in modelling heavily tailed health data. Also, distribution theorists and practitioners can explore new methods identified here in extending distributions. This work contributes to the
growing literature on distributional extensions and provides a room for further methodological developments. In future, the development of more efficient estimation procedures, including Bayesian approaches and robust estimation techniques, would be valuable in addressing potential numerical challenges associated with the distribution.

\section*{Acknowledgements}

The authors thank anonymous colleagues for their constructive review. 

\section*{Declarations}

\subsection*{Authors Contributions}All the Authors contributed equally.

\subsection*{Funding} No funding was received for this research. 

\subsection*{Conflict of interest} The authors declared no conflict of interest
 
\subsection*{Ethics approval and consent to participate} Not required

\subsection*{Data availability} The data used is secondary data and cited accordingly within the paper.

\begin{appendices}\label{secA1}

\end{appendices}

\bibliography{sn-bibliography}

\end{document}